\def\lln{\ell_n}
\def\llnm{\ell_n^m}
\def\th{\theta}
\def\Ex{\mathbb{E}}
\def\cone{\mathcal{C}}
\def\tth{\tilde{\theta}}
\def\ths{\th^\star}
\def\hth{\hat{\theta}}
\def\rs{_{rs}}
\def\Y{\mathcal{Y}}
\def\grad{\nabla}
\def\hess{\nabla^2}
\def\bbd{\bar{d}}
\def\bd0{\bar{d} _0}
\def\ef{F(\xi,T)}
\def\bef{\bar{F}(\xi,T)}
\def\eftwo{F^2(\xi,T)}
\begin{document}

\title{Sparse Estimation in Ising Model
via Penalized Monte Carlo Methods }

\author{\name B{\l}a{\.z}ej Miasojedow \email bmia@mimuw.edu.pl \\
       \addr Institute of Applied Mathematics and Mechanics\\
       University of Warsaw\\
       ul. Banacha 2, 02-097, Warszawa, Poland\\
       and\\
       Institute of Mathematics\\
       Polish Academy of Sciences\\
       ul. {\'S}niadeckich 8, 00-656 Warszawa
       \AND
       \name Wojciech Rejchel \email  wrejchel@gmail.com\\
       \addr Faculty of Mathematics and Computer Science \\
       Nicolaus Copernicus University\\
       ul. Chopina 12/18,
87-100 Toru{\'n}, Poland\\
and\\
\addr Institute of Applied Mathematics and Mechanics\\
       University of Warsaw\\
       ul. Banacha 2, 02-097, Warszawa, Poland\\}

       \editor{Sara van de Geer}
\maketitle

\begin{abstract}%
We consider a  model selection problem in high-dimensional binary Markov random fields. 
The usefulness of the Ising model in studying systems of complex interactions has been confirmed in many papers. The main drawback of this model is the intractable norming constant that makes estimation of parameters  very challenging. 
In the paper we propose a Lasso penalized version of the Monte Carlo maximum likelihood method.
We prove that our algorithm, under mild regularity conditions,  recognizes the true dependence structure of the graph with high probability. The efficiency of the proposed method is also investigated via numerical studies.
\end{abstract}

\begin{keywords}
  Ising model, Monte Carlo Markov chain, Markov random field, model selection, Lasso penalty 
\end{keywords}

\section{Introduction}
\label{introduction}

A Markov random field is an undirected graph $(V,E),$ where $V=\{1,\ldots, d\}$ is a set of vertices and $E \subset V \times V$ is a set of edges. The structure of this graph describes conditional independence among subsets of a random vector 
$Y=(Y(1),\ldots,Y(d)),$ where a random variable $Y(s)$ is associated with a vertex
$s \in V.$ Finding interactions between random variables is a central element of many branches of science, for example biology, genetics, physics or social network analysis. 
The goal of the current paper is to recognize the structure of a graph on the basis of a sample consisting of $n$ independent graphs. We consider the high-dimensional setting, i.e. the number of vertices $d$ can be comparable or larger than the sample size $n.$ It is motivated by contemporary applications of Markov random fields in the above-mentioned places, for instance gene microarray data.

The Ising model \citep{Ising25} is an important example of a mathematical model that is often used to explain relations between discrete random variables. In the literature one can find many papers that argue for its effectiveness in recognizing the structure of a graph 
\citep{Ravi10, HofTib09,Guoetal10, Xueetal12, Jalali11}. This model also plays  a key role in our paper. On the other hand, the Ising model is an example of an intractable constant model that is  the joint distribution of $Y$ is known only up to a norming constant and this constant cannot be calculated in practice. 

Thus, there are two main difficulties in the considered model. The first one is the high-dimensionality of the problem. The second one is the intractable norming constant. To overcome the first obstacle we apply a well-known Lasso method \citep{Tibsch96}.   
The properties of this method in model selection are deeply studied in many papers that mainly investigate
linear models or generalized linear models \citep{Bickel09, geerbuhl11, HuangGLM12, Geer2008, YeZhang10, zhaoYu06, zhou09}. However, it is not difficult to find papers that describe properties of Lasso estimators in more complex models, for instance Markov random fields \citep{banerjee08, geerbuhl11, Ravi10, HofTib09,Guoetal10, Xueetal12} that are considered in this paper.

There are many approaches trying to overcome the second obstacle that is the intractable norming constant. For instance, in \citet{Ravi10} one proposes to perform $d$ regularized logistic regression problems. This idea is based on the fact that the norming constant reduces, if one considers the conditional distribution instead of the joint distribution in the Ising model. 
This simple fact is at the heart of the pseudolikelihood approach \citep{Besag74} that is replacing the likelihood (that contains the norming constant) by the product of conditionals (that do not contain the norming constant). This idea is widely applied in the literature \citep{HofTib09,Guoetal10, Xueetal12, Jalali11} to study model selection properties of high-dimensional Ising models. However, this approach works well only if the pseudolikelihood is a good approximation of the likelihood. In general, it depends on the true structure of a graph. Namely, if this structure of the graph is sufficiently simple (examples of different structures can be found in section \ref{simulation}), then the product of conditionals should be close to the joint distribution. However, in practice this knowledge is unavailable. 
Another approach is described in \citet{banerjee08}. It adapts the method that estimates the precision matrix in gaussian graphical models to the binary case.
In the current paper we propose the  approach to the norming constant problem that relates to Markov chain Monte Carlo (MCMC) methods. 
Namely, the norming constant is approximated using the importance  
sampling technique. This method is independent of the unknown complexity of the estimated graph. 
It is sufficient that the size of a sample used in importance sampling is sufficiently large to have good approximation of the likelihood. 

The MCMC method is a well-known approach to overcome the problem with the intractable norming constant in classical (low-dimensional) estimation of graphs. 
For instance, its properties  are investigated in influential papers \citet{GeyerThompson92, Geyer94}.
In the high-dimensional Ising model these algorithms were also studied. 
For example, \citet{Honorio2012} and \citet{atchade2017Perturbed}  analyzed stochastic versions
of proximal gradient algorithms. Both papers derive nonasymptotic bounds between the output of the algorithm and the true minimizer of the cost function. 
However, in the current paper we focus on model selection properties of MCMC methods.
We investigate them 
in the high-dimensional
scenario and compare to the existing methods that are mentioned above.
Model selection for undirected graphical models means finding the existing edges in the ``sparse" graph that is a graph having relatively few edges (comparing to the total number of possible edges $\frac{d(d-1)}{2}$ and the sample size $n$).

The paper is organized as follows: in the next section we describe the Ising model and our approach to the problem that relates to minimization of the penalized MCMC approximation of the likelihood. The literature concerning this topic is also discussed. In section~\ref{mainresults} we state main theoretical results. Details of efficient implementation are given 
in section~\ref{implementation}, while the results of numerical studies are presented in section~\ref{numerical}. The conclusions can be found in section \ref{conclusions}. Finally,  the proofs are postponed to appendices ~\ref{auxresults} and \ref{proofsmain}.

\section{Model description and related works}
\label{model_descr}

In this section we introduce the Ising model and the proposed method. It also contains 
a review of the literature relating to this problem.

\subsection{Ising Model and undirected graphs}
\label{Ising_model}

Let $(V,E)$ be an undirected graph that consists of a set of vertices $V$ and a set of edges $E.$
The random vector $Y=(Y(1),Y(2),\ldots,Y(d)),$ that takes values in $\Y,$
is associated with this graph. In the paper we consider a special case of the Ising model that 
$Y(s) \in \{-1,1\}$ and the joint distribution of $Y$ is given by the formula 
\begin{equation}
\label{ising}
p(y| \ths) = \frac{1}{C(\ths)} \exp \left( \sum_{r<s} \ths_{rs} y(r) y(s) \right),
\end{equation}
where the sum in \eqref{ising} is taken over such pairs of indices  $(r,s) \in \{1,\ldots, d\}^2 $ that $r<s.$ The vector $\ths \in \mathbb{R}^{d(d-1)/2}$ is a true parameter and $C(\ths)$ is a norming constant, i.e. 
\[C(\ths)= \sum_{y\in \Y} \exp \left( \sum_{r<s} \ths_{rs} y(r) y(s) \right)\,.\]  
The norming constant is a finite sum but it consists of $2^d$ elements that makes it intractable even for a moderate size of $d$.

For convenience, we denote $J(y)=\left(
y(r)y(s) \right)_{r<s},$ so 
\[p(y|\ths)=  \frac{1}{C(\ths)} \exp \left[(\ths)' J(y) \right]\,.\] 

\begin{remark}
The model \eqref{ising} is a simplified version of the Ising model, for instance we omit an external field in \eqref{ising}. We have decided to restrict to the model 
containing only parameters $\ths _{rs}$, because interactions between random variables is what we focus on  in the current paper. However, our results can be relatively easy extended.
\end{remark}

The Ising model has the following property: vertices $r$ and $s$ are not connected by an edge (i.e. $\ths _{rs} = 0$) means that variables $Y(r)$ and $Y(s)$
are conditionally independent given the other vertices. Therefore, we recognize the structure of the graph (its edges) by estimating the parameter $\ths.$
Assume that $Y_1, \ldots, Y_n$ are independent random vectors from the model \eqref{ising}. Then the negative log-likelihood is 
\begin{equation}
\label{loglike}
\lln(\theta)=  - \frac{1}{n} \sum_{i=1}^{n}   \th ' J(Y_i) + \log {C(\th)}\,.
\end{equation}
The second term in \eqref{loglike} contains the norming constant so we cannot use \eqref{loglike} to estimate $\ths.$ To overcome this problem one usually replaces the negative log-likelihood by its approximation and estimates $\ths$ using the minimizer of this approximation. In the current 
paper the approximation of \eqref{loglike} is based on Monte Carlo (MC) methods. Suppose that  $h(y)$ is an importance sampling distribution and note that
\begin{equation}
\label{constant}
{C(\th)}=\sum_{y \in \Y} \exp \left[\th ' J(y)\right] =\sum_{y \in \Y} \frac{\exp \left[\th ' J(y)\right]}{h(y)}h(y) =\Ex_{Y\sim h}\frac{\exp \left[\th ' J(Y)\right]}{h(Y)}
\end{equation}
for each $\th .$
An MC approximation of the norming constant is
\begin{equation}
\label{approxim}
\frac{1}{m}\sum_{k=1}^{m}\frac{\exp \left[\th ' J(Y^k)\right]}{h(Y^k)} \,,
\end{equation}
where $Y^1,\ldots, Y^m$ is a  sample drawn from $h$ or, which is more realistic and is considered in the current paper,
$Y^1,\ldots, Y^m$ is a Markov chain with $h$ being a density of its stationary distribution. 
 Thus, the MCMC approximation of \eqref{loglike} is 
\begin{equation}
\label{MCapprox}
\llnm (\th) = - \frac{1}{n} \sum_{i=1}^{n}   \th ' J(Y_i) 
+ \log
\left( \frac{1}{m}\sum_{k=1}^{m}\frac{\exp \left[ \th ' J(Y^k)\right]}{h(Y^k)} \right)\,.
\end{equation}
A natural choice of the importance sampling distribution is $h(y) =p(y|\psi)$ for some parameter $\psi.$ It leads to  
\begin{equation} 
\label{McappEx}
\llnm (\th) =  - \frac{1}{n} \sum_{i=1}^{n}   \th ' J(Y_i) 
+ \log
\left( \frac{1}{m}\sum_{k=1}^{m} \exp \left[ (\th- \psi) ' J(Y^k)\right] \right)
+ \log (C(\psi))\,.
\end{equation}
The last term in \eqref{McappEx}, that contains the unknown constant  $C(\psi),$ does not depend on $\th$, so it can be ignored while minimizing \eqref{McappEx}. 

Our goal is selecting the true model (recognizing edges of a graph) in the high-dimensional setting. It means that the number 
of vertices $d$ can be large. In fact, it can be  greater than the sample size, i.e. $d= d_n \gg n.$ To estimate the vector $\ths$ we use penalized empirical risk minimization.
The natural choice of the penalty would be  the $l_0$-penalty but it makes the procedure nonconvex and computationally expensive even for moderate values of $d.$ To avoid such problems we use the Lasso penalty and minimize a function
\begin{equation}
\label{MCLasso}
\llnm(\th) + \lambda_n^m |\th|_1\,,
\end{equation}
where $|\th|_1 = \sum_{r<s} |\th_{rs}|$ and 
$\lambda_n^m >0$ is a smoothing parameter that is a balance between minimizing 
the MCMC approximation and the penalty. 
We denote the minimizer of \eqref{MCLasso} by $\hth _n^m.$ 
Notice that the function \eqref{MCLasso} that we minimize is convex in $\th,$ because the Lasso penalty as well as the MCMC approximation \eqref{McappEx} are convex function in $\th.$ The latter follows from the fact that the Hessian of $\llnm(\th),$ that is given explicitly in \eqref{2deriv}, is  a weighted covariance matrix with positive weights that sum up to one. 
Convexity of the problem is important from the practical and theoretical point of view. First,
every minimum of a convex function is the global minimum, so there are no local minimum problems. Second, convexity is also utilized in the proofs of the results contained in the paper.
In further parts of the paper we study properties of $\hth _n^m$ in model selection.

\subsection{Related works}
\label{related works}

Model selection in the high-dimensional Ising model is a popular topic and many papers investigating this problem using different methods can be found in the 
literature \citep{banerjee08, bresler08, Lee06, Ravi10,HofTib09, Guoetal10,Xueetal12,  Jalali11}. 
The significant part of them uses  the pseudolikelihood approximation  with the Lasso penalty. For instance, \citet{Ravi10} applies 
it by considering $d$ logistic regression problems. They prove that this algorithm is model selection consistent, if some regularity conditions are satisfied. 
These conditions are similar to the ``irrepresentable conditions" \citep{zhaoYu06} that are sufficient to prove an analogous property in the linear model. 
The pseudolikelihood method with the Lasso as a ``joint'' procedure is proposed in \citet{HofTib09}. Moreover, in the same paper one also proposes an ``exact'' algorithm
that minimizes the negative log-likelihood with the Lasso penalty.
However, this procedure also bases on the pseudolikelihood approximation. Model selection consistency of the latter algorithm has not been studied yet. 
The former procedure has this property that is showed in \citet{Guoetal10} provided that  conditions similar to \citet{Ravi10} are satisfied.  
In \citet{Xueetal12} the Lasso penalty is replaced by the SCAD penalty \citep{FanLi2001} and theoretical properties of this algorithm are studied. In \citet{Jalali11} one replaces restrictive irrepresentable conditions by weaker restricted strong convexity and smoothness conditions \citep{negahban12} and proves model selection consistency of an algorithm that joints ideas from \citet{Ravi10} and \citet{zhangt09}. Namely, it performs $d$ separate logistic regression problems with the forward-backward greedy approach.  The algorithm described in \citet{banerjee08} is also based on the likelihood approximation with the Lasso penalty. However, it does not apply the pseudolikelihood method. Using the determinant relaxation \citep{WainJord06} it treats the problem of model selection in discrete Markov random fields analogously to the continuous case.

In the current paper we apply the MCMC method to overcome the intractable norming constant problem. 
Our experimental study (presented in section \ref{numerical}) confirms 
that estimators based on the MCMC approximation usually perform comparably or better in model selection than their competitors from \citet{banerjee08, HofTib09, Ravi10}.
Our theoretical results are similar to  those described in the previous paragraph that is we prove model selection consistency. But, in general, our assumptions are weaker than their analogs from the above-mentioned papers. 
The detailed comparison is given after Corollary~\ref{maincor} in section~\ref{mainresults}.

Moreover, the advantage of our algorithm is that the MCMC approach allows us to approximate the norming constant with an arbitrary precision. The approximation error of other methods is given by the problem/data. It depends on the unknown structure of a graph and a user cannot improve it. In our approach a user can improve approximation by increasing the length of simulation, using the MCMC algorithm tailored to the problem. 
The obvious drawback of our approach is the need of additional simulations to obtain the MCMC sample. It makes our procedure computationally more complex, but at the same time more accurate in selecting the true model.

\subsection{Notation}
\label{notation}

In further parts of the paper we need few notations. Most of them are collected in this subsection.

For simplicity, we write $\hth$ and $\lambda$ instead of $\hth _n^m$ and $\lambda_n^m,$ respectively.  Besides, we denote the number of estimated parameters in the model by $\bar{d} = d(d-1)/2.$ Nonzero coordinates of $\ths$ are collected in the set $T$, and $T^c$ is a completion of $T.$ Besides, 
$\bar{d}_0 = |T|$ denotes the number of elements of the set $T.$

For a vector $a$ we denote its $l_\infty$-norm by $|a|_\infty = \max\limits_k |a_k|$ and $a^{\otimes 2}=aa'.$ The vector $a_T$ is the same as the vector $a$ on $T$ and zero otherwise. The  $l_\infty$-norm of a matrix $\Sigma$ is denoted by $|\Sigma|_\infty= \max\limits_{k,l} |\Sigma_{kl}|.$ 

Let us consider a Markov chain on space $\mathcal{S}$ with transition kernel $P(x,\cdot\,)$ and stationary distribution $\pi$. We define the Hilbert space $L^2(\pi)$ as a space of functions that $\pi(f^2)<\infty$ and the 
inner product is given as $\langle f,g\rangle=\int_\mathcal{S} f(x)g(x)\pi(dx)$. The linear operator $P$ on $L^2(\pi)$ associated with transition kernel $P(x,\cdot\,)$ is defined as follows
\[Pf(x)=\int_\mathcal{S} f(y)P(x,dy)\,.\]
We say that the Markov chain has a spectral gap $1-\kappa$ if and only if \[\kappa=\sup\{|\rho|\colon \rho\in Spec(P)\setminus \{1\}\}\,,\]
where $Spec(\,\cdot\,)$ denotes the spectrum of an operator in $L^2(\pi)$. For reversible chains the spectral gap property is equivalent to geometric ergodicity of the chain, see \citep{kontoyiannis2009geometric,roberts1997geometric}.

In the paper we focus on the Gibbs sampler for the Ising model. However, theoretical results remain true for other MCMC algorithms as long as the spectral gap property is satisfied.
The random scan Gibbs sampler for the Ising model with a joint distribution  $p(y|\psi)$ is defined as follows: given $Y^{k-1}$, first we sample uniformly index $r$ and we draw $Y^{k}(r)$ from the distribution

\begin{equation}\label{eq:gibbs}
\mathbb{P}(Y^k(r)=1)=\frac{\exp\left\{\psi'J(Y^+)  \right\}}{\exp\left\{\psi'J(Y^+)  \right\}+\exp\left\{\psi'J(Y^-)  \right\}}\,, 
\end{equation}
where $Y^+(s)=Y^-(s)=Y^{k-1}(s)$ for $s\neq r$ and $Y^+(r)=1$, $Y^-(r)=-1$. For $s\neq r$ we set $Y^{k}(s)=Y^{k-1}(s)$.

Suppose that $Y^1,\ldots, Y^m$ is a Markov chain on $\Y$ generated by a random scan Gibbs sampler defined as above. By construction the chain is irreducible, aperiodic and 
 $h(y)=p(y|\psi)$ is the density of its stationary distribution. Therefore, the stationary distribution is defined uniquely and the chain is ergodic for any initial measure $\nu$ with the density $q.$ 
 Moreover, there exists a spectral gap $1-\kappa$, because the state space is finite and the chain is reversible. Actually, $\kappa$ is the second greatest absolute value of eigenvalues of the transition matrix.
We will need three quantities related to this Markov chain :
\begin{equation}
\label{betas}
\beta_1 =\sqrt{ \sum_{y \in \Y} \frac{q^2(y)}{h(y)}}, \quad
\beta_2= \frac{1-\kappa}{1+\kappa}, \quad
M = \max_{y \in \Y} \frac{\exp((\ths)'  J(y))}{  h(y) C(\ths)}\,.
\end{equation}
Roughly speaking, these three values can be viewed as: $\beta_1$ -- how close the initial density is to the stationary one, $\beta_2$ -- how fast the chain ``mixes", $M$ -- how close the importance sampling density is to the true density \eqref{ising}.

\section {Main results}
\label{mainresults}

In this section we state key results of the paper. In the first one (Theorem \ref{main}) we show that the estimation error of the minimizer of the MCMC approximation with the Lasso penalty can be controlled. In the second result (Corollary \ref{maincor}) we prove model selection consistency for the thresholded Lasso estimator \citep{zhou09}.

First, we introduce the cone invertibility factor that plays an important role in investigating properties of Lasso estimators. It is defined analogously to \citet{YeZhang10, HuangGLM12, Cox13} 
that concerns linear regression, generalized linear models and the Cox model, respectively. 
It is also closely related to the compatibility condition \citep{Geer2008} or the restricted eigenvalues \citep{Bickel09}.
Thus, for $\xi>1$ and the set $T$  we define a cone as
\[\cone (\xi,T) = \left\{\th: |\th_{T^c}|_1 \leq \xi |\th_{T}|_1\right\}\,. \]
For a nonnegative definite matrix $\Sigma$ the cone invertibility factor is  
\[
F(\xi, T, \Sigma) = \inf _{0 \neq \th \in \cone (\xi,T)} \frac{ \th ' \Sigma \th}{|\th _T|_1 |\th|_\infty}\,.
\]
Cone invertibility factors of Hessians of two functions are crucial in our argumentation.
The first function is the expectation of the negative log-likelihood \eqref{loglike}, i.e.
\begin{equation}
\label{trueloglike}
\Ex \lln (\th) = - \th ' \Ex J(Y) + \log C(\th)
\end{equation}
and the second one is the MCMC approximation \eqref{MCapprox}.
We denote them as
\begin{equation}
\label{F}
\ef= \inf _{0 \neq \th \in \cone (\xi,T)} \frac{ \th ' \hess \log C(\ths) \th}{|\th _T|_1 |\th|_\infty}
\end{equation}
and 
\begin{equation}
\label{Fbar}
\bef= \inf _{0 \neq \th \in \cone (\xi,T)} \frac{ \th ' \hess \llnm(\ths) \th}{|\th _T|_1 |\th|_\infty} \,,
\end{equation}
respectively. Notice that only the values of $\hess \log C(\th)$ and $\hess \llnm(\th)$ at the true parameter $\ths$ are taken into consideration in \eqref{F} and \eqref{Fbar}.

Now we can state main results of the paper.
\begin{theorem}
\label{main}
Let $\varepsilon > 0, \xi >1$ and $\alpha (\xi)=  2+ \frac{e}{\xi-1}\:.$  If
\begin{equation}
\label{ncond}
n \geq  \frac{8 (1+ \xi)^4 \, \alpha^2 (\xi) \, \bd0 ^2 \, \log(2 \bbd /\varepsilon)}{\eftwo }\,,
\end{equation}
\begin{equation}
\label{mcond}
 m \geq \frac{64 (1+\xi)^4 \, \alpha^2 (\xi) \, \bd0^2  \, M^2\,  \log \left[2 \bbd (\bbd+1)\beta_1 / \varepsilon \right] }{\eftwo  \beta_2} \,,
\end{equation}
then with probability at least $1-4 \varepsilon$ we have the inequality
\begin{equation} 
\label{main_in}
\left|\hth - \ths \right| _\infty \leq \frac{2 e \, \xi \,\alpha(\xi) \lambda}{(\xi +1)
[\alpha(\xi)-2] \ef}\,, 
\end{equation}
where
\begin{equation}
\label{lam}
\lambda = \frac{\xi + 1}{\xi -1}  \max\left( 2 \sqrt{\frac{2 \log(2 \bbd/\varepsilon)}{n}}, 8 M \sqrt{\frac{\log\left[(2 \bbd+1)\beta_1 /\varepsilon\right]}{m \beta_2}}
\right)\,.
\end{equation}
\end{theorem}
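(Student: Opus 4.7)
The plan is to follow the standard empirical-process template for $\ell_1$-penalized $M$-estimators (cf.\ \citet{Geer2008, YeZhang10, HuangGLM12}), but carefully tracking the two independent sources of randomness — the data $Y_1,\ldots,Y_n$ and the Markov chain $Y^1,\ldots,Y^m$ — and using the spectral-gap constants $\beta_1,\beta_2,M$ defined in \eqref{betas}. Throughout, let $v=\hth-\ths$.

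First I would control the penalized score at the truth, showing that $|\grad\llnm(\ths)|_\infty\le \tfrac{\xi-1}{\xi+1}\lambda$ with probability at least $1-4\varepsilon$, which is exactly why $\lambda$ has the form \eqref{lam}. Writing
\[
\grad\llnm(\ths)=\bigl[\grad\Ex\lln(\ths)-\tfrac1n\textstyle\sum_i J(Y_i)\bigr]+\bigl[\grad\llnm(\ths)-\grad\lln(\ths)\bigr],
\]
the first bracket is an average of coordinate-wise $\{-1,1\}$-bounded centered variables, so Hoeffding gives the $\sqrt{2\log(2\bbd/\varepsilon)/n}$ term by a union bound over the $\bbd$ coordinates. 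The second bracket is the Monte-Carlo error of an importance-sampling ratio estimator along the Gibbs chain; a Hoeffding-type bound for geometrically ergodic Markov chains (available via spectral-gap techniques as in \citealt{kontoyiannis2009geometric}) produces the $M\sqrt{\log[(2\bbd+1)\beta_1/\varepsilon]/(m\beta_2)}$ term, where $M$ and $\beta_1$ enter through the warm-start / bounded-weights bounds on the importance ratio $\exp[\ths{}'J(y)]/[h(y)C(\ths)]$ and $\beta_2$ through the effective sample size. I would most likely extract these two tail bounds as auxiliary lemmas in Appendix \ref{auxresults}.

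Second, on the event just described, the usual KKT/convexity manipulation for Lasso gives
\[
\llnm(\hth)-\llnm(\ths)-\grad\llnm(\ths)'v\le\tfrac{\xi-1}{\xi+1}\lambda\,|v|_1+\lambda(|\ths|_1-|\hth|_1),
\]
which, after splitting $|v|_1=|v_T|_1+|v_{T^c}|_1$, forces $v\in\cone(\xi,T)$; hence $|v|_1\le(1+\xi)|v_T|_1$. At the same time, Taylor's theorem plus convexity gives a lower bound
$\llnm(\hth)-\llnm(\ths)-\grad\llnm(\ths)'v\ge \tfrac{1}{2}v'\hess\llnm(\tth)v$
for some $\tth$ on the segment. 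Because $\llnm$ is a log-sum-exp, it is self-concordant with respect to the $|\cdot|_1$ geometry: $\hess\llnm(\tth)\succeq e^{-|v|_1}\hess\llnm(\ths)$ on the segment. Combined with the cone bound, this gives
\[
\tfrac{1}{2}e^{-(1+\xi)|v_T|_1}\bef\,|v_T|_1\,|v|_\infty\le\tfrac{2\xi}{\xi+1}\lambda\,|v_T|_1,
\]
and solving for $|v|_\infty$ against the self-bounded exponential — using $|v_T|_1\le \bd0|v|_\infty$ — yields the factor $e/[\alpha(\xi)-2]$ appearing in \eqref{main_in}; this is the step where the constant $\alpha(\xi)=2+e/(\xi-1)$ is tuned so that the self-concordance feedback loop closes.

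Third, the bound \eqref{main_in} is stated with the deterministic quantity $\ef$, not the random $\bef$, so the conditions \eqref{ncond}–\eqref{mcond} must enter through a Hessian concentration argument ensuring $\bef\ge \tfrac12\ef$ with high probability. Since $\hess\llnm(\ths)$ is an explicit weighted covariance of $J(Y^k)$ with importance weights plus a data-independent piece (see \eqref{2deriv}, which the authors reference), I would bound $|\hess\llnm(\ths)-\hess\log C(\ths)|_\infty$ entrywise by the same two Hoeffding / Markov-chain Hoeffding devices used in the score step, then convert an $\ell_\infty$ matrix deviation into a cone-invertibility deviation through the standard inequality $|\bef-\ef|\le(1+\xi)^2\bd0\,|\hess\llnm(\ths)-\hess\log C(\ths)|_\infty$. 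The quadratic-in-$\bd0$ and $(1+\xi)^4\alpha^2(\xi)$ factors in \eqref{ncond}–\eqref{mcond} are exactly what is needed so that this deviation is at most $\ef/2$.

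The main obstacle I anticipate is the MCMC concentration step — getting a clean Hoeffding-type tail for $\tfrac{1}{m}\sum_k\exp[(\ths-\psi)'J(Y^k)]/C(\psi)$ and its $\theta$-derivatives with the explicit constants $\beta_1$ (warm start), $\beta_2$ (spectral gap) and $M$ (bounded weights), so that they plug into the final bounds in the forms appearing in \eqref{mcond} and \eqref{lam}. The self-concordance calibration giving precisely $\alpha(\xi)$ is delicate but algebraic; everything else is template.
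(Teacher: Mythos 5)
Your proposal follows essentially the same route as the paper: the score at $\theta^\star$ is split into an i.i.d.\ Hoeffding term and an MCMC importance-sampling term controlled by a Markov-chain Hoeffding inequality (the paper's Lemma \ref{derivative} and Corollary \ref{cor_der}), cone membership plus a self-concordance-type lower bound on the Bregman divergence give the $\ell_\infty$ rate with the $e/[\alpha(\xi)-2]$ factor (Lemmas \ref{basiclem} and \ref{estim}), and the random $\bar F(\xi,T)$ is replaced by the deterministic $F(\xi,T)$ via entrywise Hessian concentration and the $(1+\xi)^2\bar{d}_0$ conversion (Lemma \ref{matrix_diff} and Corollary \ref{Fdiff}). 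The remaining differences are cosmetic --- the paper allocates $2\varepsilon$ to the score event and $2\varepsilon$ to the Hessian event rather than $4\varepsilon$ to the score alone, and runs the localization through a one-dimensional monotone function of $t$ rather than a mean-value Taylor expansion --- so this is essentially the paper's argument.
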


\begin{corollary}
\label{maincor}
Suppose that conditions \eqref{ncond} and \eqref{mcond} are satisfied.
Let  $\ths _{min} = \min\limits_{(r,s) \in T} |\ths \rs|$ and $R_n^m$ denote the right-hand side of the inequality \eqref{main_in}.
Consider the Lasso estimator with a threshold $\delta >0$ that is the set of nonzero coordinates of the final estimator is defined as $\hat{T} = \{(r,s):
|\hth \rs |> \delta\}.$ 
If $ \ths _{min}/2> \delta \geq  R_n^m,$ then 
\[
P\left( \hat{T} = T \right) \geq 1- 4 \varepsilon\,.
\]
\end{corollary}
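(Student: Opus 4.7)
The plan is to derive Corollary~\ref{maincor} as a direct consequence of Theorem~\ref{main} combined with the thresholding operation. By Theorem~\ref{main}, under the sample size conditions \eqref{ncond} and \eqref{mcond}, with probability at least $1 - 4\varepsilon$ the coordinatewise bound $|\hth - \ths|_\infty \leq R_n^m$ holds. I would work on this high probability event throughout.

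The argument then splits into verifying $\hat{T} \subseteq T$ and $T \subseteq \hat{T}$. For any $(r,s) \notin T$ we have $\ths_{rs} = 0$, so
\begin{equation*}
|\hth_{rs}| = |\hth_{rs} - \ths_{rs}| \leq |\hth - \ths|_\infty \leq R_n^m \leq \delta,
\end{equation*}
which places $(r,s) \notin \hat{T}$. Conversely, for $(r,s) \in T$, the reverse triangle inequality together with the beta-min-type condition $\ths_{min} > 2\delta$ gives
\begin{equation*}
|\hth_{rs}| \geq |\ths_{rs}| - |\hth_{rs} - \ths_{rs}| \geq \ths_{min} - R_n^m > 2\delta - \delta = \delta,
\end{equation*}
so $(r,s) \in \hat{T}$. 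Combining both inclusions yields $\hat{T} = T$ on the high-probability event, proving the corollary.

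Since everything reduces to a single $\ell_\infty$ deviation bound, there is essentially no serious obstacle: all of the technical work has been done in Theorem~\ref{main}. The only points worth stating carefully are (i) the chain of inequalities $\ths_{min} - R_n^m > \delta$ uses both $\delta \geq R_n^m$ and $\delta < \ths_{min}/2$ in an essential way, explaining the two-sided constraint on $\delta$; and (ii) no extra union bound over coordinates is needed, since the $\ell_\infty$ statement in Theorem~\ref{main} already handles all $\bar d$ entries simultaneously.
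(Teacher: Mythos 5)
Your proposal is correct and follows essentially the same argument as the paper: both invoke the $\ell_\infty$ bound of Theorem \ref{main} on the event of probability at least $1-4\varepsilon$ and then check the two inclusions $\hat{T}\subseteq T$ and $T\subseteq\hat{T}$ via the triangle inequality and the two-sided condition on $\delta$. No differences worth noting.
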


The main results of the paper describe properties of estimators that are obtained by minimization of the MCMC approximation \eqref{MCapprox} with the Lasso penalty. Theorem \ref{main} states that the estimation error of the Lasso estimator can be controlled. Roughly speaking, the estimation error is small, if the initial sample size and the MCMC sample size are large enough, the model is sparse and the cone invertibility factor $\ef$ is not too close to zero. The influence of the model parameters $( n,  d, \bd0)$ as well as Monte Carlo parameters $(m,\beta_1, \beta_2, M )$ on the results are explicitly stated. It is worth to emphasize that 
our results work in the high-dimensional scenario, i.e. the number of vertices $d$ can be greater than the sample size $n$ provided that the model is sparse. Indeed, the condition \eqref{ncond} is satisfied even if $\bar{d} \sim O\left(e^{n^{c_1}} \right), \bd0 \sim 
O(n^{c_2})$ and $c_1+2c_2 <1.$ 
The condition \eqref{mcond}, that relates to the MCMC sample size, is also reasonable.
The number $\beta_1$ depends  on the initial and stationary distributions. In general, 
its relation to the number of vertices is exponential. However, in \eqref{mcond} it appears
with the logarithm. Moreover,  $\beta_1$ is also reduced using so called burn-in time,
i.e. the beginning of the Markov chain trajectory is discarded. 
Next, the number $\beta_2$ is related to the spectral gap
of a Markov chain. Under mild conditions the inverse of $\beta_2$ depends polynomially on $d$, and under strong regularity conditions it can be reduced to $ O(d\log d)$ as in \citet{mossel2013exact}. 
Finally, there is also the number $M$ in the condition \eqref{mcond} that relates to the  distance between the stationary distribution $h(\cdot)$ and $p(\cdot|\ths)$. Stating the explicit relation between $M$ and the model seems to be difficult. However,  the algorithm, that we propose to calculate $\hth ,$ is designed in such a way to minimize the impact of $M$ on the results. 
The detailed implementation of the algorithm is given in section \ref{implementation}.

The estimation error of the Lasso estimator in Theorem \ref{main} is measured in $l_\infty$-norm. Similarly to \citet{Cox13}, it can be extended to the general $l_q$-norm, $q \geq 1.$ We omit it, because \eqref{main_in} is sufficient to obtain the second main result of the paper (Corollary \ref{maincor}). It  states that the thresholded Lasso estimator is model selection consistent, if, additionally to \eqref{ncond} and \eqref{mcond}, the nonzero parameters are not too small and the threshold is appropriately chosen. It is a consequence of the fact, which follows from Theorem \ref{main}, that the Lasso seperates significant parameters from irrelevant ones, i.e. for each $(r,s) \in T$ and $(r',s') \notin T$ we have
$
|\hth \rs | >|\hth _{r's'}  |
$ 
with high probability. However, Corollary \ref{maincor} does not give a way of choosing the threshold $\delta$, because both endpoints of the interval $[R_n^m, \ths _{min}/2] $ are unknown. It is not a surprising fact that has been already observed, for instance, in linear models \citep[Theorem 8]{YeZhang10}. In section~\ref{implementation} we propose a method of choosing a threshold that relates to information criteria.
 
We have already mentioned that there are many approaches to the high-dimensional Ising model.
Now we compare conditions that are sufficient to prove model selection consistency in the current paper to those basing on the likelihood approximation. If we simplify regularity conditions in Theorem \ref{main}, Corollary \ref{maincor} and forget about Monte Carlo parameters in \eqref{mcond}, then we have:
\begin{enumerate}
\item[(a)] the cone invertibility factor condition is satisfied, 
\item[(b)] the sample size should be sufficiently large, that is $n>\bd0 ^2 \log d,$
\item[(c)] the nonzero parameters should be sufficiently large, that is $\ths_{min} > \sqrt{\frac{ \log d}{n}}\:.$ 
\end{enumerate}

In \citet[Corollary 1]{Ravi10} one needs stronger irrepresentable condition in (a).  Their analog of (b) is $n\geq v^3 \log d, $ where $v$ is the maximum neighbourhood size. Since $v$ is smaller than $\bd0$, their condition is less restrictive. 
However, in (c) they require the minimum signal strength to be higher than ours, because it has to be larger than  $\sqrt{\frac{v \log d}{n}}$ as distinct from $\sqrt{\frac{ \log d}{n}}$ in our paper.

Assumptions in  \citet[Theorem 2]{Guoetal10} are stronger than ours. Indeed, they need irrepresentable condition in (a), $\bd0$ in the third power in (b) and additional factor $\sqrt{\bd0}$ in (c).

In \citet[Corollary 3.1 (2)]{Xueetal12} model selection consistency of Lasso estimators is also proved with more restrictive conditions than ours. Namely, they are similar to \citet{Ravi10} and 
\citet{Guoetal10} but $\bd0$ is reduced in the condition (c).
Moreover, they also consider the pseudolikelihood approximation with the SCAD penalty 
and shows that the condition (a) seems to be superfluous in this case, see \citet[Corollary 3.1 (1)]{Xueetal12}. However, using the SCAD penalty they minimize a nonconvex function to obtain an estimator, so 
they have to prove that the computed (local) minimizer is the desired theoretic local solution. Their approach can be viewed as a sequence of weighted Lasso problems, so they need auxiliary Lasso procedures to behave well. Therefore, the irrepresentable condition is assumed \citep[Corollary 3.2]{Xueetal12}.  

The conditions sufficient for model selection consistency that are stated in \citet[Theorem 2]{Jalali11} are comparable to ours but also more restrictive. Instead of the condition (a) they consider a similar requirement called the restricted strong convexity condition. It is completed by the  restricted strong smoothness condition. Moreover, in the the lower bound in the condition (c) they need an additional factor $\sqrt{\bd0}$ as well as the upper bound for $\ths _{min}.$

In the proof of Theorem \ref{main} we use methods that are well-known while investigating properties of Lasso estimators as well as some new argumentation. The main novelty (and difficulty) is the use of the Monte Carlo sample that contains dependent vectors. The first part of our argumentation consists of two steps:
\begin{enumerate}
\item[(i) ] the first step can be viewed as ''deterministic''. We apply methods that were developed in \citet{YeZhang10, HuangGLM12, Cox13} and strongly exploit convexity of the considered problem. These auxiliary results are stated in Lemma \ref{basiclem} 
and Lemma \ref{estim} in the appendix 
\ref{auxresults},
\item[(ii)] the second step is ''stochastic''. We state a probabilistic inequality that bounds the 
$l_\infty$-norm of the derivative of the MCMC approximation \eqref{MCapprox} at $\ths,$ that is 
\begin{equation}
\label{deriv}
\grad \llnm(\ths) = - \frac{1}{n} \sum_{i=1}^{n}    J(Y_i) +
\frac{\sum\limits_{k=1}^{m} w_k(\ths) J(Y^k)}
                 {\sum\limits_{k=1}^{m} w_k(\ths) }
\end{equation}
where
\begin{equation}
w_k(\th) =
\frac{\exp \left[ \th ' J(Y^k)\right]}{h(Y^k)}\:, \quad k=1,\ldots, m\,.
 \label{eq: IS_weights}
\end{equation}
Notice that \eqref{deriv} contains independent random variables $Y_1, \ldots, Y_n$ from the initial sample and the Markov chain $Y^1, \ldots, Y^m$ from the MC sample. Therefore, to obtain the exponential inequalities for the $l_\infty$-norm of \eqref{deriv}, which are given in Lemma \ref{derivative} and Corollary \ref{cor_der} in the appendix \ref{auxresults}, we apply the MCMC theory. In particular, we frequently use the following Hoeffding's inequality
for Markov chains \citep[Theorem 1.1]{Miasojedow14}.
\end{enumerate} 

\begin{theorem}
\label{miasojedow}
Let $Y^1, \ldots, Y^m$ be a  reversible Markov chain with a stationary distribution with a density $h$ and a spectral gap $1-\kappa$ . Moreover, let $g: \Y \rightarrow \mathbb{R}$ be a bounded function  and $\mu = \Ex _{Y \sim h} g(Y)$ be a stationary mean value.
Then for every $t>0, m \in \mathbb{N}$ and an initial distribution $q$
\[
P \left( \left| \frac{1}{m} \sum_{k=1}^m g(Y^k) - \mu \right| >t
\right) \leq 2 \beta_1 \exp \left(- \frac{\beta_2 m t^2}{|g|_\infty ^2}
\right)\, ,\]
where $|g|_\infty=\sup\limits_{y \in \Y} |g(y)|$ and $\beta_1, \beta_2$
are defined in \eqref{betas}.
\end{theorem}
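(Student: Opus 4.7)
The plan is to derive a Chernoff-type tail bound and reduce the problem, via the spectral gap, to controlling the moment generating function of the centered partial sums under the stationary Markov chain. First I would restrict attention to the upper tail $P(S_m/m - \mu > t)$, where $S_m = \sum_{k=1}^m g(Y^k)$, since the lower tail follows by applying the same argument to $-g$ (this costs only the factor $2$ in front of $\beta_1$). The role of $\beta_1$ is to absorb the mismatch between the initial distribution $\nu$ (density $q$) and the stationary $\pi$ (density $h$): for any nonnegative $F$ on $\Y^m$, Cauchy--Schwarz in $L^2(\pi)$ gives
\[
\mathbb{E}_\nu[F(Y^1,\ldots,Y^m)] = \mathbb{E}_\pi\!\left[\frac{q(Y^1)}{h(Y^1)}\,F\right] \le \sqrt{\sum_{y} \frac{q^2(y)}{h(y)}}\;\sqrt{\mathbb{E}_\pi[F^2]} = \beta_1\sqrt{\mathbb{E}_\pi[F^2]}.
\]

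Second, I would apply the exponential Markov inequality with parameter $\lambda>0$ and combine with the bound above, yielding
\[
P_\nu(S_m/m - \mu > t) \le e^{-\lambda m(\mu+t)}\,\mathbb{E}_\nu[e^{\lambda S_m}] \le \beta_1\, e^{-\lambda m(\mu+t)}\sqrt{\mathbb{E}_\pi\!\left[e^{2\lambda S_m}\right]}.
\]
So the whole problem reduces to upper-bounding $\log \mathbb{E}_\pi[e^{2\lambda (S_m - m\mu)}]$ in terms of the spectral gap and $|g|_\infty$, which is the genuinely stochastic step.

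Third, and this is the core argument, I would bound the stationary MGF operator-theoretically. Let $P$ denote the transition operator and $M_\varphi$ multiplication by $\varphi$ on $L^2(\pi)$. Writing $\bar g = g - \mu$, one has the representation
\[
\mathbb{E}_\pi\!\left[e^{2\lambda(S_m - m\mu)}\right] = \big\langle \mathbf{1},\; \big(M_{e^{\lambda \bar g}} P\, M_{e^{\lambda \bar g}}\big)^{\!m-1}\, e^{2\lambda \bar g}\mathbf{1}\big\rangle_\pi\,.
\]
Because $P$ is reversible, the sandwiched operator $K_\lambda := M_{e^{\lambda \bar g}} P\, M_{e^{\lambda \bar g}}$ is self-adjoint on $L^2(\pi)$, so its largest eigenvalue equals $\|K_\lambda\|$ and can be analysed via Rayleigh--Ritz. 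I would then perform a perturbation analysis, treating $K_\lambda$ as a perturbation of $M_{e^{\lambda \bar g}}^{\,2}P$: expanding $e^{\lambda \bar g} = 1 + \lambda \bar g + O(\lambda^2)$ and using that $\mathbf{1}$ is the top eigenvector of $P$ with all other spectrum contained in $[-\kappa,\kappa]$, the variational formula for the top eigenvalue produces a bound of the form
\[
\log\|K_\lambda\| \le 2\lambda\mu + \frac{2\lambda^2 |g|_\infty^{\,2}\,(1+\kappa)}{1-\kappa} + o(\lambda^2)
\]
(the crucial $(1+\kappa)/(1-\kappa)$ ratio being precisely $1/\beta_2$, and arising from optimizing the quadratic form over directions orthogonal to $\mathbf{1}$ using the spectral gap). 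Iterating $m-1$ times and absorbing the harmless boundary factor $e^{2\lambda\bar g}\mathbf{1}$ gives $\mathbb{E}_\pi[e^{2\lambda(S_m-m\mu)}] \le \exp(C \lambda^2 m |g|_\infty^2/\beta_2)$ for an explicit absolute constant $C$.

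Finally, plugging this back into the Chernoff bound and optimizing in $\lambda$, the square root in front of $\mathbb{E}_\pi[\cdot]$ halves the exponent while the perturbation argument provides the compensating factor $(1+\kappa)$; optimizing $\lambda \propto t\,\beta_2/|g|_\infty^{\,2}$ gives the advertised exponent $-\beta_2 m t^2/|g|_\infty^{\,2}$ up to constants, and doubling for the two-sided tail yields the prefactor $2\beta_1$. The main obstacle I anticipate is the spectral bound on $K_\lambda$: making the perturbation argument quantitative enough to extract the exact $(1-\kappa)/(1+\kappa)$ dependence (rather than merely $1-\kappa$) requires a careful comparison of the Dirichlet form of $K_\lambda$ with that of $P$ on the orthogonal complement of the constants, and controlling the cross terms $\langle \bar g f, P f\rangle_\pi$ uniformly over $f \perp \mathbf{1}$ is the step where the finiteness of $\Y$ and the sup-norm bound on $g$ must be used simultaneously.
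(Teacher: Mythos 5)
You should first note that the paper contains no proof of Theorem~\ref{miasojedow} at all: it is imported verbatim from \citet[Theorem 1.1]{Miasojedow14}, whose proof combines an $L^2(\pi)$ change of measure with the operator-theoretic Hoeffding bound of Le\'on and Perron (2004) for stationary reversible chains. Your steps 1 and 2 reproduce exactly that mechanism: the reduction to one tail (giving the factor $2$), the identity $\Ex_\nu[F]=\Ex_\pi\left[\tfrac{q}{h}(Y^1)F\right]$ with Cauchy--Schwarz producing $\beta_1$ and halving the exponent, and the exponential Markov inequality. Two small slips in your setup: the MGF representation should read $\langle e^{\lambda\bar g},\,K_\lambda^{m-1}\,e^{\lambda\bar g}\rangle_\pi$ rather than $\langle \mathbf{1},\,K_\lambda^{m-1}\,e^{2\lambda\bar g}\mathbf{1}\rangle_\pi$ (your version tilts the two endpoints with the wrong weights; the correct boundary factor $\|e^{\lambda\bar g}\|_{L^2(\pi)}^2\le e^{2\lambda|g|_\infty}$ is then harmless), and since $\bar g$ is centered no $2\lambda\mu$ term should appear in the eigenvalue bound.

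The genuine gap is your step 3. A second-order perturbation expansion of $\log\|K_\lambda\|$ with an $o(\lambda^2)$ remainder controls only the regime $\lambda\to 0$, whereas the Chernoff optimization sets $\lambda\asymp t\beta_2/|g|_\infty^2$, a fixed positive number; there the $o(\lambda^2)$ term is uncontrolled, and after raising to the power $m-1$ it gets multiplied by $m$, so nothing nonasymptotic survives. What a perturbative expansion genuinely yields is a CLT/asymptotic-variance statement, not a Hoeffding inequality valid for every $t>0$ and every $m$ --- this distinction is precisely why the uniform bound was Le\'on--Perron's contribution. The non-perturbative repair is the quadratic-form comparison: for reversible $P$ with $Spec(P)\setminus\{1\}\subset[-\kappa,\kappa]$ one has $P\preceq \kappa I+(1-\kappa)\Pi$ with $\Pi f=\langle \mathbf{1},f\rangle_\pi\mathbf{1}$, whence $\|K_\lambda\|$ is dominated by the top eigenvalue of $\kappa M_{e^{2\lambda\bar g}}+(1-\kappa)\,e^{\lambda\bar g}\langle e^{\lambda\bar g},\cdot\,\rangle_\pi$, a rank-one perturbation of a multiplication operator whose largest eigenvalue solves an explicit scalar equation; bounding that solution uniformly in $\lambda$ is what produces the exact factor $\beta_2=(1-\kappa)/(1+\kappa)$ in the exponent. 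Without this step (or an equivalent bound on the tilted operator valid for all $\lambda$), the advertised exponent $-\beta_2 m t^2/|g|_\infty^2$ cannot be extracted from your outline, so as written the proof is incomplete at its central point, even though its architecture matches the cited source.
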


The next part of our argumentation relates to the fact that the Hessian of the MCMC approximation at $\ths,$ that is 
\begin{equation}
\label{2deriv}
\hess \llnm(\ths) = 
\frac{\sum\limits_{k=1}^{m} w_k(\ths) J(Y^k)^{\otimes 2}}
                 {\sum\limits_{k=1}^{m} w_k(\ths)} - 
\left[\frac{\sum\limits_{k=1}^{m} w_k(\ths) J(Y^k)}
                 {\sum\limits_{k=1}^{m} w_k(\ths) } \right]^{\otimes 2}\, ,
\end{equation} 
is random variable. Similar problems were considered in several papers investigating properties of Lasso estimators in the high-dimensional Ising model \citep{Ravi10, Guoetal10, Xueetal12} or the Cox model \citep{Cox13}. We overcome this difficulty by bounding from below the cone invertibility factor $\bef$ by nonrandom $\ef .$ Therefore, we need to prove that 
Hessians of  
\eqref{trueloglike} and \eqref{MCapprox} are close. It is obtained again using the MCMC theory in Lemma \ref{matrix_diff} and Corollary \ref{Fdiff} in the appendix 
\ref{auxresults}.
 
Finally, the proofs of Theorem \ref{main} and Corollary \ref{maincor} are stated in the appendix \ref{proofsmain}.

\section{Details of implementation}\label{implementation}
In this section we describe in details practical implementation of the algorithm analyzed in the previous section.

The solution of the problem \eqref{MCLasso} depends on the choice of $\lambda$ in the penalty term and the parameter $\psi$ in the instrumental distribution. 
Finding ``optimal'' $\lambda$ and $\psi$ is difficult in practice. 
To overcome this problem  we compute a sequence of minimizers  $(\hat\theta_i)_i$ such that 
$\hth _i$ corresponds to $\lambda = \lambda_i$ and the sequence $(\lambda_i)_i$ is decreasing. In the second step we use the Bayesian Information Criterion (BIC) to choose 
the final penalty $\lambda$. More precisely, we start with the greatest value $\lambda_0$ for which the entire vector $\hat\theta$ is zero. 
For each value of $\lambda_i,$ $i \geq 1,$ we set $\psi=\hat\theta_{i-1}$ 
and use the MCMC approximation \eqref{MCLasso} with $Y^1,\dots,Y^m$ given by a Gibbs sampler with the stationary distribution $p(\cdot | \hth _{i-1})$.
This scheme exploits warm starts and leads to  more stable algorithm. 
Next, the estimator $\hth$ is chosen using  BIC that is a popular method of choosing $\lambda$ in the literature, for instance in \citet{Xueetal12}. 
Notice that the function $\ell^m_n(\th)$ is  convex, so we can use proximal gradient algorithms to compute $\hat\theta_i$ as a solution of
\eqref{MCLasso} for a given $\lambda_i$ and $\psi$. Precisely, we use the FISTA algorithm with backtracking from \citet{FISTA}. The whole procedure is summarized in Algorithm~1.

\begin{algorithm}[H]\caption{MCMC Lasso for Ising model}
\begin{algorithmic}
 \STATE Let $\lambda_0>\lambda_1>\cdots>\lambda_{100}$ and $\psi=0$.
 \FOR{$i=1$ to $100$ }
 \STATE Simulate $Y^1,\dots,Y^m$ using a Gibbs sampler with the stationary distribution $p(y|\psi)$.
 \STATE Run the FISTA algorithm to compute $\hat\theta_i$ as 
 \[\arg \min\limits_{\th} \,\{ \ell_n^m(\theta)+\lambda_i|\theta|_1\}\,.\]
 \STATE Set $\psi=\hat\theta_i.$
  \ENDFOR
 \end{algorithmic}
 Next, set $\hth = \hth _{i^*}$, where
\[
 i^*=\arg \min_{1 \leq i \leq 100} \left \{n\llnm (\hat{\th} _i)+\log(n)\Vert \hat{ \theta_i} \Vert_0\right\}
\]
and $\Vert\theta\Vert_0$ denotes the  number of non-zero elements of  $\theta$.
\end{algorithm}

In Algorithm~1 we use $100$ values of $\lambda$ uniformly spaced on the log scale, starting from the largest $\lambda$, which corresponds to the empty model.
We use $m=10^3 d$ iteration of the Gibbs sampler. To compute $\llnm (\hat{\th} _i)$ for $i=1,\ldots,100$ in BIC we generate one more sample of the size $m=10^4 d$ using the Gibbs sampler with the stationary distribution  $p(\cdot|\hat \theta_{50}).$

The important property of our implementation is that the chosen $\psi = \hth _{i-1}$ is usually close to $\hth _i,$ because differences between consecutive $\lambda_i$'s are small. 
In our studies the final estimator $\hth$ is the element of the sequence $(\hth _i)_i$ that 
recognizes the true model in the best way, i.e. it minimizes the MCMC approximation and is sparse simultaneously. One believes that the final estimator $\hth=\hth _i$ is close to $\ths,$ therefore the chosen $\psi= \hth _{i-1}$ should  be also 
similar to $\ths$ that makes $M$ in \eqref{betas} close to one. 
Finally, notice that conditionally on the previous step our algorithm fits to the framework described in subsection \ref{Ising_model}.

Note that in the first iteration in Algorithm~1 we use an uniform distribution on $\{-1,+1\}^d$ as an instrumental distribution and we can use i.i.d sample $Y^1,\dots,Y^m$. Therefore, for 
$\lambda_1$
we get $\beta_1=\beta_2=1$. When we compute estimators for $\lambda_i$ with $i\geq 2$  we use the last sample from the previous step as an initial point, so since the Markov chain generated by the Gibbs sampler is ergodic its initial distribution should be close to 
$p(y|\hat \theta_{i-2})$. Therefore, even without the burn-in time  $\beta_1$ should be approximately equal to the $L^2$-distance between $p(y|\hat \theta_{i-1})$ and $p(y|\hat \theta_{i-2})$, which is small because differences between
consecutive $\lambda_i$'s are small. So, after discarding initial iterations  $\beta_1$ is further reduced. Our choice of stationary distributions also leads to relatively small variances of importance sampling weights given in  \eqref{eq: IS_weights}.

The bounds on $\beta_2$ are challenging problem itself and the sharp bounds are available only in very specific cases. Due to that, we do not have the explicit control on $\beta_2$. However, in our procedure
the stationary distributions are typically given by sparse Ising models and by \citet{mossel2013exact} the spectral gap depends mostly on the number of existing edges in the graph. Therefore $\beta_2$ should not vanish to rapidly with dimensionality of the problem.

Finally, the thresholded estimator is obtained using the Generalized Information Criterion (GIC).
For a prespecified set of thresholds $\Delta$ we calculate  
\[
\delta^* =\arg \min_{\delta \in \Delta} \left \{n\llnm (\hat{\th}^\delta)+\log(\bar d)\Vert \hat{\th} ^\delta \Vert_0\right\}\;,
\]
where $\hat{\th}^\delta$ is the Lasso estimator $\hth$ after thresholding with the level $\delta$. To compute $\llnm (\hat{\th} ^\delta)$ for $\delta \in \Delta$ in GIC we generate the last sample of the size $m=10^4 d$ using the Gibbs sampler with the stationary distribution  $p(\cdot|\hat \theta).$ To find the optimal threshold we apply GIC that uses larger penalty than BIC. Choosing the threshold in this way should be better in model selection and is recommended, for instance, in \citet{pokmiel:15}.

In the rest of this section we discuss computational complexity of our method.
The computational cost of a single step of the Gibbs sampler is dominated by computing probability \eqref{eq:gibbs}, which is of the order $O(r)$, where $r$ is the maximal degree of vertices in a graph related to the stationary distribution. 
In the paper we focus on estimation of sparse graphs, so the proposed $\lambda _i$'s have to be sufficiently large to make $\hth _i$'s sparse. Therefore, the degree $r$ is rather small and the computational cost of generating
$Y^1,\dots,Y^m$ is of order $O(m)$. Next, we need to compute $\ell^m_n(\th)$ and its gradient. For an arbitrary Markov chain the cost of these computation is of the order $O(d^2m)$. But when we use single site updates as in the Gibbs sampler
we can reduce it to $O(dm)$ by remembering which coordinate of $Y^k$ where updated. Indeed, if we know that only the $r$ coordinates  are updated in the step $Y^k\to Y^{k+1},$ then
\begin{multline*}
\th ' J(Y^{k+1})=\th' J(Y^k)+ \sum_{s\colon s<r} \left(\th_{sr}[Y^{k+1}(s)Y^{k+1}(r)- Y^{k}(s)Y^{k}(r)]\right)
\\+\sum_{s\colon s>r} \left(\th_{rs}[Y^{k+1}(s)Y^{k+1}(r)- Y^{k}(s)Y^{k}(r)]\right)\,.
\end{multline*}

Finally, it is well-known that FISTA \citep{FISTA} achieve accuracy $\epsilon$ in $O({\epsilon}^{\frac{1}{2}})$ steps. So, the total cost of computing the solution for single $\lambda_i$ with precision $\epsilon$
is of order $O({\epsilon}^{\frac{1}{2}}md)$. The further reduction of the cost can be obtained  using sparsity of $\hat\theta_{i-1}$ in computing $\ell^m_n(\th)$ and its gradient, and  introducing active variables inside the FISTA algorithm.


\section{Numerical experiments}\label{numerical}
In this section we present efficiency of the proposed method via numerical studies. First we compare our method to three algorithms, which we have mentioned previously,
using simulated data sets. Next we apply our method to the real data example.

\subsection{Simulated data}\label{simulation}

\begin{figure}[htb]
 \centering
 \includegraphics[width=\textwidth]{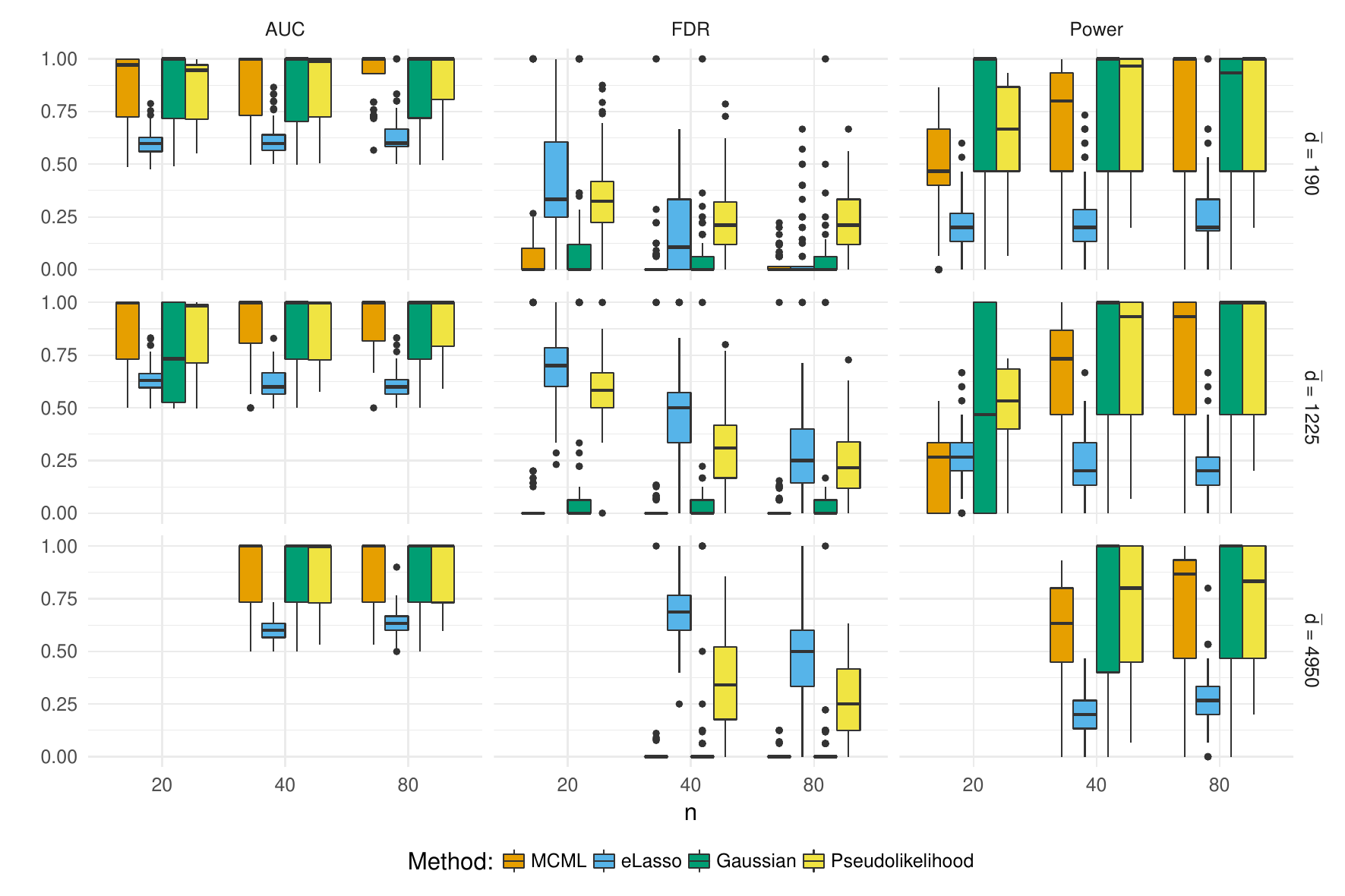}
 \caption{The results for $M1$ model}
 \label{fig:M1}
\end{figure}

To illustrate the performance of the proposed method we simulate  data sets in two scenarios:
\begin{itemize}
 \item[M1]  The first $6$ vertices are correlated, while the remaining vertices are independent: $\ths _{rs}=\pm 2$ for $r<s$ and $s=2,3,4,5,6$, other $\ths _{rs}=0$.
 Thus, the model dimension in this problem is $15$. The signs are chosen randomly.
 \item[M2]  The first $20$ vertices have the ``chain structure'', and the rest are independent: $\ths _{r-1,r}=\pm 1$ for $r\leq 20$. Again the signs are chosen randomly 
 and the model dimension is $19.$
 \end{itemize}
The model $M1$ corresponds to a dense structure on small subset of vertices. The model $M2$ is a simple structure which involves relatively large subset of vertices.


\begin{figure}[htb]
 \centering
 \includegraphics[width=\textwidth]{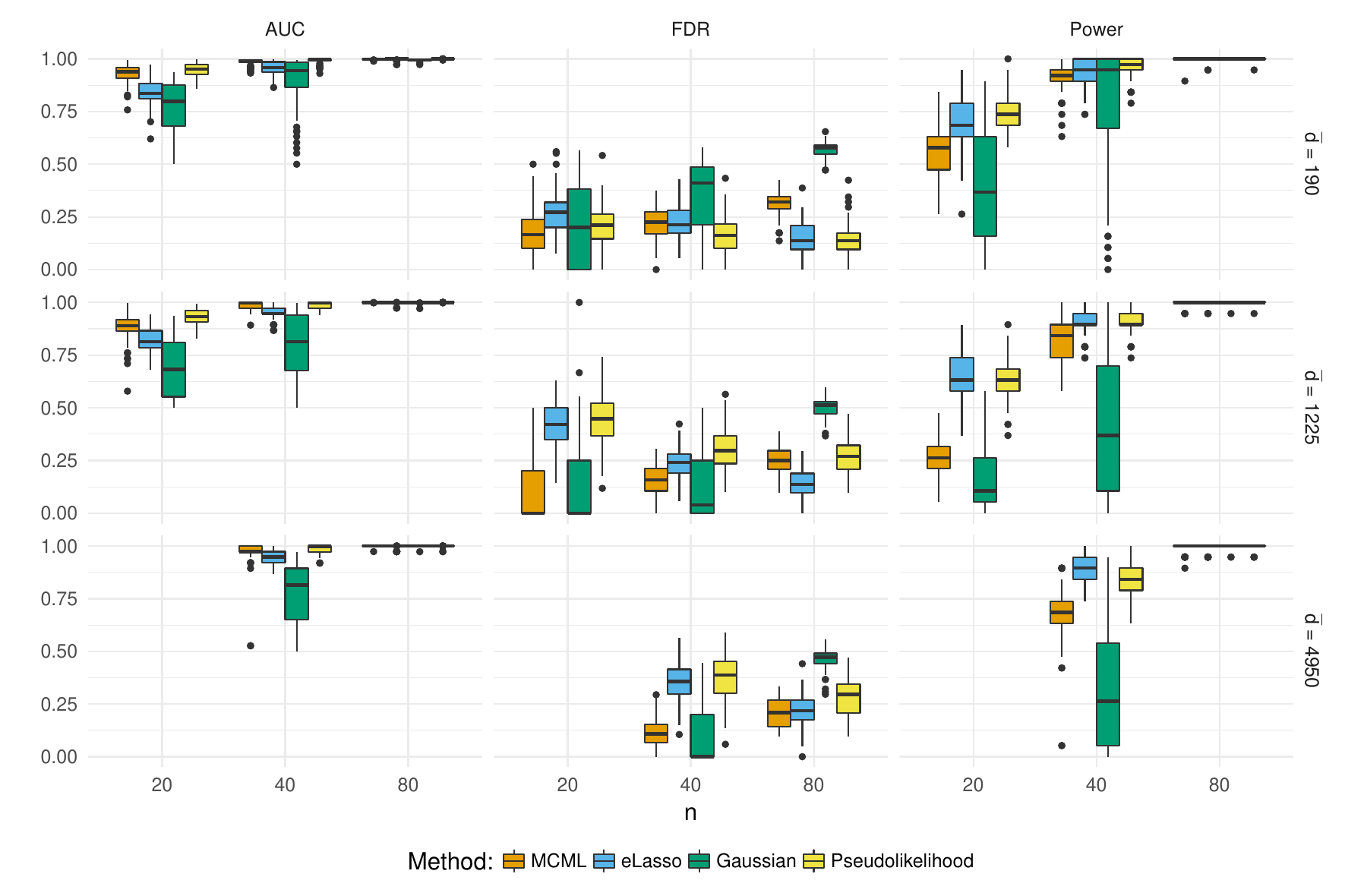}
 \caption{The results for $M2$ model}
 \label{fig:M2}
\end{figure}

We consider the following cases: $d=20,50,100$. So, the considered  number of possible edges (parameters of the model) is $\bar d = 190, 1225, 4950$, respectively. 
For $\bar d=190,1225$ we use  $n=20,40,80$ and for $\bar d=4950$ we use $n=40,80$.

For each configuration of the model, the number of vertices $d$ and the number of observation $n$ we sample $100$ replications of data sets. 
For sampling a data set we use a final configuration of independent Gibbs samplers of the length $10^6$.

In simulation study we compare our methods to the following methods: the pseudolikelihood approach,  \textsc{eLasso} proposed by \citet{Borkulo2014} and the Gaussian approximation
from \citet{banerjee08}. The \textsc{eLasso} method is based on separate logistic models and it is similar to the method proposed by \citet{Ravi10}. For all methods except
\textsc{eLasso} we use two stage procedures, which are analogous to Algorithm~1. Namely, we choose $\lambda$ by BIC  and in the second step we choose the threshold $\delta$ by GIC. 
 For the Gaussian approximation we use the same approximate likelihood as in \citet{Viallon2013}. For \textsc{eLasso} for every node we use BIC as in \citet{Borkulo2014}.

In the comparison we use the following measures of the accuracy. First, to observe the ability of methods to separate true and false edges we compute AUC, where the ROC curve is computed as the threshold $\delta$ varies.
The estimates are also compared using the false discovery rate (FDR) and the ability of recognizing true edges (denoted by
``Power''). The results are summarized in Figure~\ref{fig:M1} and Figure~\ref{fig:M2} for models $M1$ and $M2,$  respectively.

In the first model we can observe that our algorithm and the Gaussian approximation work very well and comparably. The latter has slightly larger power, but at the price of slightly larger FDR. The dominance of these two methods over the pseudolikehood estimator and \textsc{eLasso} is evident. The pseudolikelihood method finds well true edges, but is not able to discard false edges. \textsc{eLasso} works very poorly in this model.  

The second model has a simple structure, so both methods based on the pseudolikelihood approach work much better than in $M1.$ The accuracy of the Gaussian approximation is weak in this model. It has substantial problem with finding true edges, especially when $d$ is large and $n$ small. For $n=20,40$ our estimator has relatively small FDR and large Power. In this model we can observe that FDR increases as $n$ increases for the Gaussian approximation  and for $n=80$ it reaches about 0.5. The MCML approximation has also ``increasing FDR'', but this behaviour is less conspicuous.
In fact, for $n=80$ FDR of our algorithm is still comparable to those of \textsc{eLasso} and Pseudolikelihood.

Summarizing, it is difficult to indicate the winner algorithm. We can observe that the quality of our algorithm  in selecting  the true model is satisfactory. Moreover, only this procedure works on a good level in both models and avoids making noticeable  mistakes.  The Gaussian approximation works well in  $M1$, but seems to be the weakest in  $M2$. The \textsc{eLasso}  completely fails in  $M1,$ but its quality in $M2$    is high, especially for  large sample sizes. The pseudolikelihood approach in all examples well separates true and false edges, has good power but in comparison with other methods 
its FDR is too high, so models that it chooses contains many irrelevant edges.

 Clearly, by construction the computational cost of our method is larger than
its three competitors. However, the whole time that is needed to compute our estimator is reasonable. 
For instance, for a single data set and $n=40$ computing the estimator on 3.4 GHZ CPU takes about 15 seconds for $\bar{d} = 190,$ 60 seconds for $\bar{d} = 1225$ 
and 5 minutes for $\bar{d} = 4950.$ Since our algorithm uses only sufficient statistics the dependence on $n$ of its computational cost is negligible. Thus, the computational times for $n=80$ are almost the same.

\subsection{CAL500 dataset}\label{CAL500}

\begin{figure}[htb]
 \centering
 \includegraphics[width=1.2\textwidth]{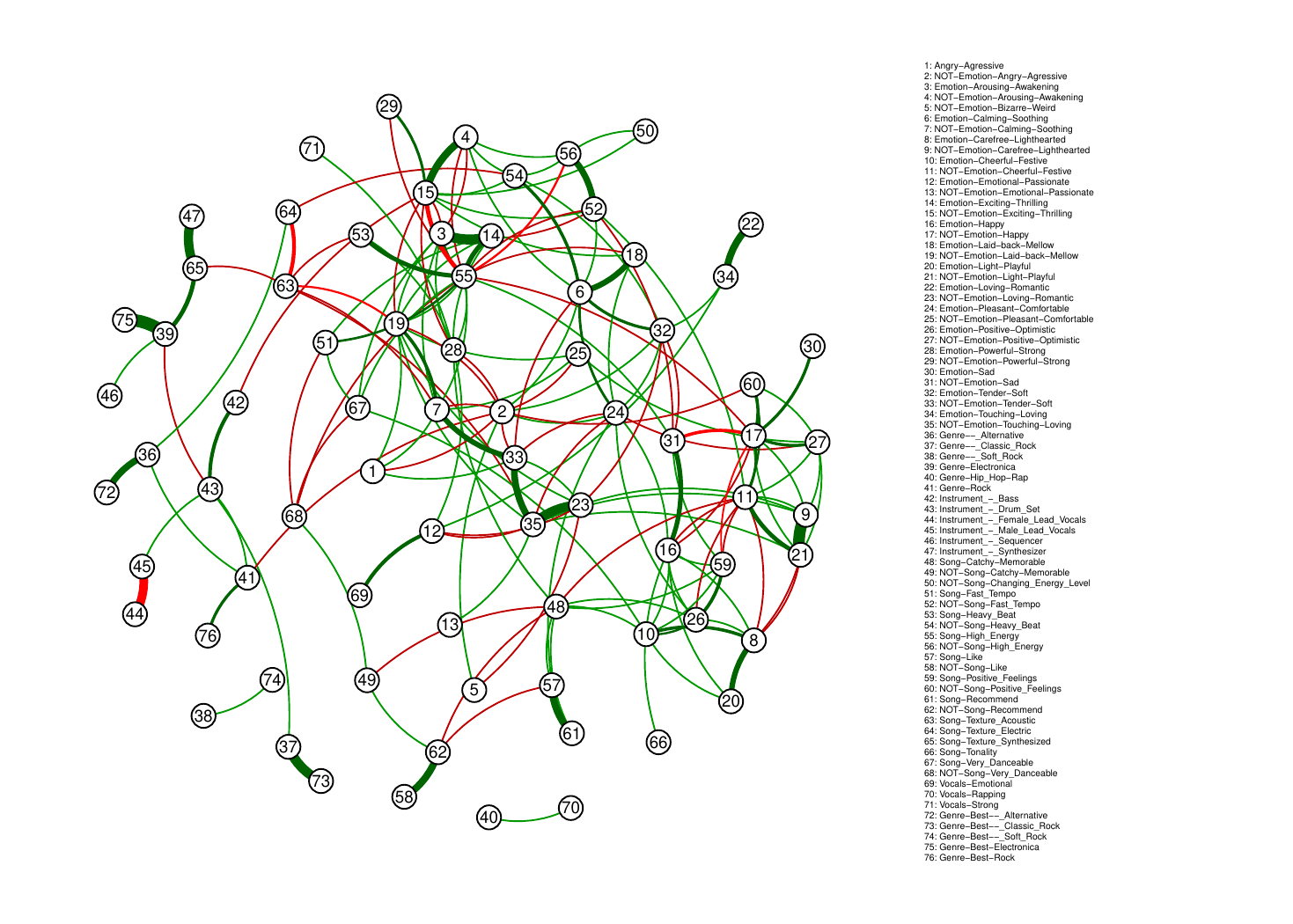}
 \caption{The obtained graph for CAL500 dataset. The width of the edges corresponds to the magnitude of $|\hat{\theta}^{\delta^*}_{rs}|$. The edges with higher absolute value are wider.
 The color denotes the sign of $\hat{\theta}^{\delta^*}_{rs}$: green -- positive, red -- negative.
 To improve clarity we do not show  edges corresponding to $|\hat{\theta}^{\delta^*}_{rs}|<0.001$.}
 \label{Fig:cal500}
\end{figure}
We apply our method to ``CAL500'' dataset \citep{Turnbull2008}. 
Working with a real data set we consider the Ising model \eqref{ising} with a linear term (an external field). This modification is motivated by the fact that in practice marginal probabilities of $Y(s)$ being $+1$  or $-1$ are unknown and should be estimated. The adaptation of our algorithm to this case is straightforward.

For model selection in the Ising model there are no natural measures of the quality of estimates. One would try to compare the prediction ability of obtained estimators, but prediction for the Ising model is challenging itself and results will be biased  by the method used to approximate predicted states. Moreover, all considered methods optimize different loss functions, so these loss functions also cannot be used to the honest comparison of the methods. Due to that, we decide to show only the results of our method for the real data example.

 The considered data set consists of $174$ binary features and $68$ numeric features for $502$ songs. We skipped 
the numeric features and apply our method to find the dependence structure between labels. These 
labels concerning genre, mood or instrument are annotated to songs.
We run our algorithm analogously to the case of simulated data and as the result
we obtain a sparse graph with $181$ edges, see Figure~\ref{Fig:cal500}. We observe that founded edges are rather intuitive. For instance, among the most positively correlated labels we have labels denoted by 3 and 14 (,,Emotion-Arousing-Awakening'' and ,,Emotion-Exciting-Thrilling''),
57 and 61 (,,Song-Like'' and ,, Song-Recommend'') or 22 and 34 (,,Emotion-Loving-Romantic'' and ,,Emotion-Touching-Loving''). On the other hand, the most negatively correlated labels are: 44 and 45 (,,Instrument-Female Lead Vocals'' and ,,Instrument - Male Lead Vocals'') or 63 and 64 
(,,Song-Texture Acoustic'' and ,,Song-Texture Electric'').

\section{Conclusions}\label{conclusions}

In the  paper we consider a problem of structure learning for binary Markov random fields. 
We base estimation of model parameters on the Lasso penalized Monte Carlo approximation
of the likelihood. In the theoretical part of the paper  we show that the proposed procedure reveals the true dependence structure with high probability.
The regularity conditions that we need are not restrictive and are weaker than assumptions used in the other approaches based on the likelihood approximation. 
Moreover, the theoretical results are completed by numerical experiments.
They confirm that the MCMC approximation is able to find the true model in a satisfactory way 
and its quality is comparable or higher than competing algorithms. 

The results of the current paper can be easily extended to other discrete Markov random 
fields. There are also some non-trivial issues that are not discussed in the paper, for instance investigating the model \eqref{ising} with predictors (covariates). The evaluation of the prediction error of the procedure is also a difficult task. Clearly, these  problems need detailed studies.

\acks{We would like to thank the associate editor and two reviewers for their comments that have improved the paper. B{\l}a{\.z}ej Miasojedow and Wojciech Rejchel are supported by Polish National Science Center grants no. 2015/17/D/ST1/01198 and no. 2014/12/S/ST1/00344, respectively. }


\appendix
\section{Auxiliary results}
\label{auxresults}

In this section we formulate lemmas that are needed to prove main results of the paper. The roles that they play are described in detail at the end of section \ref{mainresults}.
The first lemma is borrowed from \citet[Lemma 3.1]{Cox13}.

\begin{lemma}
\label{basiclem}
Let $\tth = \hth - \ths$, $z^* = |\grad \llnm(\ths)|_\infty$ and 
\[D(\hth, \th)= (\hth-\th)' \left[\grad \llnm(\hth) - \grad\llnm (\th) \right]\,.\] Then 
\begin{equation}
\label{basic}
(\lambda - z^*) |\tth_{T^c}|_1 \leq D(\hth, \ths) + (\lambda - z^*) |\tth_{T^c}|_1 
\leq (\lambda + z^*) |\tth_{T}|_1\, .
\end{equation}
Besides, for arbitrary $\xi >1$ on the event 
\begin{equation}
\label{omega1}
\Omega_1=\left\{ |\grad \llnm (\ths)|_\infty \leq \frac{\xi -1}{\xi +1} \lambda \right\}
\end{equation} 
 the random vector $\tth$ belongs to the cone $\cone (\xi, T).$ 
\end{lemma}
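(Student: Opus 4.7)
The plan is to run the standard ``basic inequality'' of Lasso theory, adapted to the convex loss $\llnm$. Since $\hth$ minimizes the convex function $\llnm(\th) + \lambda|\th|_1$, subgradient optimality gives, for every $\th$,
\[(\th - \hth)'\grad\llnm(\hth) + \lambda|\th|_1 - \lambda|\hth|_1 \geq 0.\]
Taking $\th = \ths$ and using $\ths - \hth = -\tth$ yields $\tth'\grad\llnm(\hth) \leq \lambda(|\ths|_1 - |\hth|_1)$. Separately, H\"older's inequality gives $-\tth'\grad\llnm(\ths) \leq z^*|\tth|_1$. Adding these two bounds produces
\[D(\hth, \ths) = \tth'\bigl[\grad\llnm(\hth) - \grad\llnm(\ths)\bigr] \leq \lambda(|\ths|_1 - |\hth|_1) + z^*|\tth|_1,\]
which is the starting point for the remainder of the argument.

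Next I would decompose the $\ell_1$-norms using $\ths_{T^c} = 0$. Writing $|\ths|_1 - |\hth|_1 = |\ths_T|_1 - |\hth_T|_1 - |\hth_{T^c}|_1$ and invoking the reverse triangle inequality on $T$, together with $|\hth_{T^c}|_1 = |\tth_{T^c}|_1$, gives $|\ths|_1 - |\hth|_1 \leq |\tth_T|_1 - |\tth_{T^c}|_1$. Substituting this and using $|\tth|_1 = |\tth_T|_1 + |\tth_{T^c}|_1$ yields
\[D(\hth, \ths) \leq (\lambda + z^*)|\tth_T|_1 - (\lambda - z^*)|\tth_{T^c}|_1,\]
which after rearrangement is exactly the right-hand inequality in \eqref{basic}. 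The leftmost inequality is immediate once we note $D(\hth, \ths) \geq 0$, which is just monotonicity of the gradient of the convex function $\llnm$.

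Finally, for the cone containment I would restrict to $\Omega_1$, on which $z^* \leq \tfrac{\xi-1}{\xi+1}\lambda$, so that $\lambda - z^* \geq \tfrac{2\lambda}{\xi+1} > 0$ and $\lambda + z^* \leq \tfrac{2\xi\lambda}{\xi+1}$. Combining the outer terms of \eqref{basic} gives $(\lambda - z^*)|\tth_{T^c}|_1 \leq (\lambda + z^*)|\tth_T|_1$, hence
\[|\tth_{T^c}|_1 \leq \frac{\lambda + z^*}{\lambda - z^*}|\tth_T|_1 \leq \xi |\tth_T|_1,\]
which is $\tth \in \cone(\xi, T)$. No step here is genuinely hard; the only bookkeeping that warrants care is the splitting of $|\ths|_1 - |\hth|_1$ into the $T$ and $T^c$ contributions and keeping signs straight when grouping the $z^*|\tth|_1$ term with the $\lambda$-terms.
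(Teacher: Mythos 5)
Your proof is correct and follows essentially the same route as the paper's: both rest on the first-order optimality condition of the penalized objective at $\hth$, H\"older's inequality applied to $\tth'\grad\llnm(\ths)$, and the split of the $\ell_1$-norm over $T$ and $T^c$ (with $D(\hth,\ths)\geq 0$ from gradient monotonicity giving the left inequality). The only difference is cosmetic: the paper writes out the KKT conditions coordinatewise and decomposes $D(\hth,\ths)$ term by term, whereas you use the variational-inequality form of the subdifferential together with the reverse triangle inequality, arriving at the identical bound.
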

\begin{proof}
The proof is the same as the proof of \citet[Lemma 3.1]{Cox13}. It is quoted here to make the paper complete.\\
Convexity of the MCMC approximation $\llnm(\th)$ easily implies the first inequality in \eqref{basic}. The same property combined with convexity of the Lasso penalty gives us
that zero has to belong to the subgradient of \eqref{MCLasso} at the minimizer $\hth,$ i.e.
\begin{equation}
\label{kkt}
\left\{
\begin{array}{cc}
\grad \rs \llnm(\hth) = -\lambda {\rm sign} (\hth _{rs}), & {\rm if} \; \hth \rs \neq 0\\
|\grad \rs \llnm(\hth)| \leq \lambda, & {\rm if} \; \hth \rs = 0\,,
\end{array}
\right.
\end{equation}
where we use $\grad  \llnm(\th)= \left(\grad \rs \llnm(\hth) \right)_{r<s}$ and ${\rm sign}(t)=1$ for $t>0,$ ${\rm sign}(t)=-1$ for $t<0,$ ${\rm sign}(t)=0$ for $t=0.$ 
Using \eqref{kkt} and properties of the $l_1$-norm we obtain that
\begin{align*}
D(\hth, \ths)  &= \sum_{(r,s) \in T} \tth \rs \grad \rs \llnm(\ths+\tth) +
\sum_{(r,s) \in T^c} \hth \rs \grad \rs \llnm(\ths+\tth) - \tth ' \grad \llnm (\ths) \\
&\leq \lambda \sum_{(r,s) \in T} |\tth \rs|  - \lambda \sum_{(r,s) \in T^c} |\hth \rs|
+|\tth|_1 z^*\\
&\leq \lambda |\tth _T|_1 - \lambda |\tth _{T^c}|_1 + z^* |\tth _T|_1 +z^* |\tth _{T^c}|_1 \\
&= (\lambda +z^*) |\tth _T|_1 + (z^*- \lambda) |\tth _{T^c}|_1\, .
\end{align*}
Thus, the second inequality in \eqref{basic} is also established. To prove the last claim of the lemma notice that on the event $\Omega _1$ we obtain
from \eqref{basic} 
\[
|\tth _{T^c}|_1 \leq \frac{\lambda+z^*}{\lambda-z^*} |\tth _T|_1 \leq \xi |\tth _T|_1\, .
\] 
\end{proof}

The second lemma is an adaptation of \citet[Theorem 3.1]{Cox13} to our problem.

\begin{lemma}
\label{estim}
Let $\xi >1 .$  Moreover,
let us denote $\tau = \frac{(\xi +1) \bd0 \lambda }{ \bef }$ and an event
\begin{equation}
\label{omega2}
\Omega_2=\left\{\tau < e^{-1} \right\}\,.
\end{equation}
Then $\Omega_1 \cap \Omega_2 \subset A,$ where
\begin{equation}
\label{estim1}
A= \left\{|\hth - \ths| _\infty \leq \frac{2 \xi e^\eta \lambda}{(\xi+1) \bef } \right\}\,, 
\end{equation}
where $\eta < 1 $ is the smaller solution of the equation $\eta e ^{- \eta} = \tau.$
\end{lemma}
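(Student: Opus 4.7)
The plan is to work on the event $\Omega_1 \cap \Omega_2$ and combine three ingredients: (a) the basic inequality of Lemma \ref{basiclem}, (b) an exponential ``self-bounding'' lower bound on the symmetrized Bregman divergence $D(\hth,\ths)$ that exploits the log-partition structure of $\llnm$, and (c) the definition of the cone invertibility factor $\bef$. Matching Lemma \ref{basiclem} (upper bound) against the lower bound coming from (b) and (c) will produce an implicit inequality in $|\tth|_\infty$, and the fixed-point hypothesis on $\tau$ in $\Omega_2$ will close the argument.

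On $\Omega_1$, Lemma \ref{basiclem} already delivers $\tth := \hth-\ths \in \cone(\xi,T)$ together with the upper bound $D(\hth,\ths) \leq (\lambda+z^*)|\tth_T|_1 \leq \frac{2\xi}{\xi+1}\lambda\,|\tth_T|_1$. For the matching lower bound I would write, by the fundamental theorem of calculus,
\[D(\hth,\ths) = \int_0^1 \tth'\hess\llnm(\ths+t\tth)\,\tth\,dt,\]
and use formula \eqref{2deriv}, which identifies $\hess\llnm(\theta)$ with the weighted covariance of $J(Y^k)$ under the probabilities $p_k(\theta)\propto w_k(\theta)$. Since the entries of $J(Y^k)$ lie in $\{-1,1\}$, one has $|s'J(Y^k)|\leq|s|_1$ for every shift $s$, so the tilted-to-original ratio $p_k(\ths+s)/p_k(\ths) = \exp\!\bigl[s'J(Y^k) - \log\Ex_{p(\ths)}e^{s'J(Y)}\bigr]$ is sandwiched between $e^{-2|s|_1}$ and $e^{2|s|_1}$. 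Combined with the variational identity $\mathrm{Var}_p[f] = \min_c \Ex_p[(f-c)^2]$, this yields the quadratic-form comparison
\[u'\hess\llnm(\ths+s)u \geq e^{-2|s|_1}\,u'\hess\llnm(\ths)u \qquad \text{for every } u,\]
and, applied with $s=t\tth$, $u=\tth$ and integrated over $t\in[0,1]$, it gives an exponential lower bound $D(\hth,\ths) \geq \zeta(|\tth|_1)\cdot \tth'\hess\llnm(\ths)\tth$ with $\zeta$ an explicit positive decreasing function (a multiple of $e^{-c|\tth|_1}$ suffices).

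Next, the definition \eqref{Fbar} of $\bef$, together with $\tth\in\cone(\xi,T)$, gives $\tth'\hess\llnm(\ths)\tth \geq \bef\,|\tth_T|_1\,|\tth|_\infty$, and the cone condition itself yields $|\tth|_1 \leq (1+\xi)|\tth_T|_1 \leq (1+\xi)\bd0|\tth|_\infty$. Chaining these three inequalities against the upper bound from Lemma \ref{basiclem} eliminates $|\tth_T|_1$ and collapses everything to a purely implicit inequality of the form $|\tth|_\infty\, e^{-a|\tth|_\infty} \leq \frac{2\xi\lambda}{(\xi+1)\bef}$ for a constant $a$ proportional to $(\xi+1)\bd0$. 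Substituting $\eta := a|\tth|_\infty$ and matching constants recasts this as $\eta e^{-\eta} \leq \tau$, at which point the desired bound $|\tth|_\infty \leq \frac{2\xi e^\eta \lambda}{(\xi+1)\bef}$ amounts to saying that $\eta$ is the relevant root of $\eta e^{-\eta}=\tau$.

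The hard part is this last sentence: the map $x\mapsto xe^{-x}$ is non-monotone (increasing on $[0,1]$, decreasing on $[1,\infty)$), so on $\Omega_2$ the inequality $\eta e^{-\eta} \leq \tau$ admits two admissible branches, and one must rule out the larger one. The standard device, as in \citet{Cox13, HuangGLM12}, is a continuity argument along the segment $\theta_t := \ths + t(\hth-\ths)$, $t\in[0,1]$: the same chain of estimates holds with $\hth$ replaced by $\theta_t$, the quantity $|\theta_t-\ths|_\infty$ is continuous in $t$ and vanishes at $t=0$, so by connectedness it can never cross the forbidden middle gap into the larger branch. This confines $|\tth|_\infty$ to the small branch $[0,\eta/a]$ and establishes $\Omega_1\cap\Omega_2 \subset A$.
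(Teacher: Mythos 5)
Your architecture is the one the paper uses (it is an adaptation of \citet[Theorem 3.1]{Cox13}): the upper bound on $D(\hth,\ths)$ from Lemma \ref{basiclem}, an exponential comparison between the Hessian of $\llnm$ along the segment and $\hess\llnm(\ths)$ --- your weighted-covariance sandwich argument is a correct proof of the paper's inequality \eqref{estim3} --- the cone invertibility factor $\bef$, and a fixed-point equation $\eta e^{-\eta}=\tau$ closed by a branch-selection step. Your continuity argument for ruling out the large branch is also, in substance, what the paper does: it parametrizes the segment, notes that the set of admissible $t$ is an interval containing $0$, and concludes the relevant quantity cannot jump across the gap where $xe^{-x}>\tau$.

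The gap is in the constants, and it is not cosmetic because the lemma's exact constants feed into $\alpha(\xi)$ and conditions \eqref{ncond}--\eqref{mcond}. You discard the negative term $-\tfrac{2\lambda}{\xi+1}|\tth_{T^c}|_1$ supplied by Lemma \ref{basiclem}, cancel a single factor of $|\tth_T|_1$, and then push $|\tth|_1\le(1+\xi)\bar{d}_0|\tth|_\infty$ into the exponent. With the exponent $e^{-2|\tth|_1}$ that your own sandwich produces, this chain gives $\eta e^{-\eta}\le\frac{4\xi}{\xi+1}\,\tau$ with $\eta=2(1+\xi)\bar{d}_0|\tth|_\infty$, not $\eta e^{-\eta}\le\tau$; since $\frac{4\xi}{\xi+1}>2$ for $\xi>1$, the event $\{\tau<e^{-1}\}$ no longer forces $\eta$ onto the small branch of $xe^{-x}=\tau$, and the constant in \eqref{estim1} is not recovered. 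The paper avoids this loss by running the fixed-point argument in $|\tth|_1$ rather than $|\tth|_\infty$ and by keeping the negative term so it can complete a square: for the normalized direction $\th=\tth/|\tth|_1$ with $u=|\th_T|_1$, $w=|\th_{T^c}|_1$, $u+w=1$, one has
\begin{equation*}
\frac{2\xi\lambda}{\xi+1}\,u-\frac{2\lambda}{\xi+1}\,w\;\le\;\frac{(\xi+1)\lambda}{2}\,u^2,
\end{equation*}
because $\bigl((\xi+1)u-2\bigr)^2\ge0$; combined with $\th'\hess\llnm(\ths)\th\ge\bef u^2/\bar{d}_0$ this yields exactly $2te^{-2t}\le\tau$ for $t=|\tth|_1$, hence $2|\tth|_1\le\eta$ on $\Omega_2$. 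The $l_\infty$ bound then comes from a second, separate use of the upper bound in which only one factor of $|\tth_T|_1$ is cancelled and $e^{-2|\tth|_1}\ge e^{-\eta}$ is inserted. Restructure your elimination step along these lines and the rest of your sketch goes through.
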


\begin{proof} Suppose we are on the event $\Omega_1 \cap \Omega_2.$ Denote again $\tth = \hth - \ths$ and notice that $\th = \frac{\tth}{|\tth|_1} \in \cone (\xi, T)$ by Lemma \ref{basiclem}.
Consider the function
\[
g(t) = \th ' \grad \llnm(\ths + t \th) -  \th ' \grad \llnm(\ths)
\]
for each $t \geq 0.$ This function is nondecreasing, because  $\llnm(\cdot) $ is convex.
Thus, we obtain 
$
g(t) \leq g(|\tth|_1)
$
for every $t \in (0,|\tth|_1) $.
On the event $\Omega_1$ and from Lemma \ref{basiclem} we have that 
\begin{equation}
\label{estim2}
\th ' \left[ \grad \llnm(\ths + t \th) -  \grad \llnm(\ths)
\right] + \frac{2 \lambda}{\xi +1} |\th _{T^c}|_1 \leq
\frac{2 \lambda \xi}{\xi +1} |\th _{T}|_1 \,.
\end{equation}
In further argumentation we consider all nonnegative $t$ satisfying \eqref{estim2} that is an interval $[0,\tilde{t}]$ for some $\tilde{t} >0$.
Proceeding similarly to the proof of \citet[Lemma 3.2]{Cox13} we obtain 
\begin{equation}
\label{estim3}
t \th '  \left[ \grad \llnm(\ths + t \th) -  \grad \llnm(\ths)
\right] \geq t^2 \exp (- \gamma_{t \th}) \th ' \hess \llnm(\ths) \th\,,
\end{equation}
where $\gamma_{t \th }= t \max_{k,l} |\th ' J (Y^k) -\th ' J (Y^l) | \leq 2t,$ because $J(Y^k)= \left( Y^k (r) Y^k (s)\right) \rs$  and $|\th |_1 =1$.
Therefore, the right-hand side in \eqref{estim3} can be lower bounded by 
\begin{equation}
\label{estim4}
t^2 \exp (-2t) \th ' \hess \llnm(\ths) \th\,.
\end{equation}
Using the definition of $\bef ,$ the fact that $\th \in \cone (\xi,T), $ the bound \eqref{estim4}  and \eqref{estim2} we obtain
\begin{align*}
t \exp(-2t) \frac{\bef  |\th _T|_1^2}{\bd0} &\leq t \exp(-2t) \th ' \hess
\llnm(\ths) \th \\
&\leq \th '  \left[ \grad \llnm(\ths + t \th) -  \grad \llnm(\ths)
\right] \\
 &\leq \frac{2 \lambda \xi}{\xi +1} |\th _{T}|_1 -  \frac{2 \lambda}{\xi +1} |\th _{T^c}|_1\\
&\leq \lambda (\xi +1) |\th _T|_1^2 /2\,.
\end{align*}
So, every $t$ satisfying \eqref{estim2} has to fulfill the  inequality 
$
2t \exp(-2t) \leq \tau
$.
In particular, $2\tilde{t} \exp(-2\tilde{t}) \leq \tau.$ We are on $\Omega_2,$ so it implies that $2 \tilde{t} \leq \eta,$ where $\eta $ is the smaller solution of the equation $\eta \exp(-\eta) =\tau.$
We know also that $|\tth|_1 \leq \tilde{t}$, so 
\begin{align*}
|\tth|_1 \exp(-\eta) &\leq \tilde{t} \exp(-2\tilde{t}) \leq
 \frac{\tilde{t} \exp(-2\tilde{t}) \th ' \hess \llnm(\ths) \th}{ \bef |\th _T|_1 |\th|_\infty} \\
 &\leq\frac{\th '  \left[ \grad \llnm(\ths + \tilde{t} \th) -  \grad \llnm(\ths)
\right]}{ \bef |\th _T|_1 |\th|_\infty} \\
&\leq   \frac{2 \lambda \xi}{(\xi +1) \bef |\th|_\infty}\, ,
\end{align*}
where we have used bounds \eqref{estim4} and \eqref{estim2}. Using the equality 
$|\th|_\infty = \frac{|\tth|_\infty}{|\tth|_1},$ we finish the proof.
\end{proof}

\begin{lemma}
\label{derivative}
For every natural $n,m$ and positive $t$
\begin{equation*}
 P \left( | \grad \llnm(\ths)|_\infty \leq t
\right) \geq 1 - 2 \bbd \exp \left( -nt^2/8 \right)
- \beta_1 \exp \left(  -  \frac{m \beta_2  }{4 M^2}                     \right)  -2 \bbd \beta_1 \exp \left(  -  \frac{ t^2 m \beta_2  }{64 M^2} \right)\,.
\end{equation*}
\end{lemma}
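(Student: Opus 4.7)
The plan is to exploit the fact that both summands of $\grad \llnm(\ths)$ in \eqref{deriv} estimate the same quantity $\Ex J(Y)$: the first by the i.i.d.\ LLN, the second by the importance sampling identity
\[
\Ex J(Y) \;=\; \sum_{y \in \Y} J(y)\, \frac{p(y \mid \ths)}{h(y)}\, h(y) \;=\; \frac{\Ex_{Y \sim h}[w_1(\ths) J(Y)]}{\Ex_{Y \sim h}[w_1(\ths)]}\,,
\]
so their difference $\grad \llnm(\ths)$ concentrates at zero. I would therefore allocate a deviation budget of $t/2$ to each piece and control them independently, since the initial sample and the MCMC sample are independent.

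For the empirical part, note that each coordinate $J(Y_i)_{rs} = Y_i(r)Y_i(s) \in \{-1,+1\}$, so classical Hoeffding for i.i.d.\ bounded variables gives $P\bigl(|\tfrac{1}{n}\sum_i J(Y_i)_{rs} - \Ex J(Y)_{rs}| > t/2\bigr) \leq 2 \exp(-nt^2/8)$ for each pair $(r,s)$, and a union bound over the $\bbd$ coordinates yields the first term of the stated bound.

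For the MCMC part, write $A_m = \tfrac{1}{m}\sum_k w_k(\ths) J(Y^k)$ and $B_m = \tfrac{1}{m}\sum_k w_k(\ths)$, and use the decomposition
\[
\frac{A_m}{B_m} - \Ex J(Y) \;=\; \frac{A_m - B_m\, \Ex J(Y)}{B_m}\,.
\]
Rescale by $C(\ths)$: the functions $\tilde w(y) = w(y)/C(\ths)$ and $\tilde w(y)[J(y)_{rs} - \Ex J(Y)_{rs}]$ have stationary means $1$ and $0$ respectively (by the importance-sampling identity), and by the definition of $M$ in \eqref{betas} their sup norms are bounded by $M$ and $2M$. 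I would then apply the Hoeffding inequality for Markov chains (Theorem \ref{miasojedow}) twice: once to $\tilde w$ with deviation level $1/2$ to conclude $P(B_m < C(\ths)/2) \leq \beta_1 \exp(-m\beta_2/(4M^2))$ (the one-sided form of the cited inequality yields the single $\beta_1$ factor), and once to $\tilde w \cdot [J(\cdot)_{rs} - \Ex J(Y)_{rs}]$ with deviation level $t/4$ followed by a union bound over the $\bbd$ coordinates, giving $P(|A_m - B_m \Ex J(Y)|_\infty > tC(\ths)/4) \leq 2\bbd\beta_1 \exp(-t^2 m\beta_2/(64 M^2))$. On the intersection of the two complementary events the ratio satisfies $|A_m/B_m - \Ex J(Y)|_\infty \leq (tC(\ths)/4)/(C(\ths)/2) = t/2$, matching the allocated budget, and summing the three failure probabilities produces the claimed inequality.

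The only genuinely non-routine ingredient is the handling of the self-normalized ratio: unlike in standard Lasso analyses where one deals with a single empirical mean, here I must simultaneously control a numerator and bound the denominator away from zero, and I must carefully choose the normalization by $C(\ths)$ so that the unknown constant cancels and the exponents depend only on $M$ (rather than on $|w_k|_\infty$ and $C(\ths)$ separately). Everything else is a direct combination of Hoeffding's inequality with its Markov-chain counterpart from Theorem \ref{miasojedow}.
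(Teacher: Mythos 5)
Your proposal is correct and follows essentially the same route as the paper's proof: the identical split of $\grad\llnm(\ths)$ into an i.i.d.\ empirical part and a self-normalized importance-sampling part, classical Hoeffding plus a union bound for the former, and two applications of Theorem \ref{miasojedow} (denominator at level $C(\ths)/2$, numerator at level $tC(\ths)/4$ with a union bound over coordinates) for the latter, with exactly the same constants. The only cosmetic difference is that you normalize by $C(\ths)$ before invoking the Markov-chain Hoeffding inequality while the paper keeps the factor $C(\ths)$ on both sides; the resulting exponents are identical.
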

\begin{proof}
We can rewrite $\grad \llnm(\ths)$ as
\begin{equation}
\label{der1}
\grad \llnm(\ths) = - \left[\frac{1}{n} \sum_{i=1}^{n}    J(Y_i) - \frac{\grad C(\ths)}{C(\ths)} \right] +
\frac{\frac{1}{m} \sum\limits_{k=1}^{m} w_k(\ths) \left[ J(Y^k) - \frac{\grad C(\ths)}{C(\ths)} \right]}
                 {\frac{1}{m} \sum\limits_{k=1}^{m} w_k(\ths) }
\end{equation}
Notice that the first therm in \eqref{der1} depends only on the initial sample $Y_1, \ldots, Y_n$ and is an average of i.i.d random variables. The second term depends only on the MCMC sample $Y^1, \ldots, Y^m.$ We start the analysis with the former one.
Using Hoeffding's inequality we obtain for each natural $n$, positive $t$ and a pair of indices $r<s$
$$P \left(\left| \frac{1}{n} \sum_{i=1}^{n}    J \rs (Y_i) - \frac{\grad \rs C(\ths)}{C(\ths)} \right| > t/2
\right) \leq 2 \exp \left( -nt^2/8 \right).
 $$
Therefore, by the union bound we have   
\begin{equation}
\label{deriv_form}
P \left(\left| \frac{1}{n} \sum_{i=1}^{n}    J  (Y_i) - \frac{\grad  C(\ths)}{C(\ths)} \right|_\infty > t/2
\right) \leq 2 \bbd \exp \left( -nt^2/8 \right)\,.
\end{equation}
Next, we investigate the second expression in \eqref{der1}. Its denominator is an average that depends on the Markov chain. To handle it we can use Theorem \ref{miasojedow} in section \ref{mainresults}. 
Notice that $\frac{\exp[(\ths)'J(y)]}{h(y)} \leq M C(\ths)$ for every $y$ and 
$\Ex_{Y \sim h} \frac{\exp\left[(\ths)'J(Y)\right]}{h(Y)} = C(\ths).$ Therefore, for every $n,m$ 
\begin{equation}
\label{der2}
P\left( \frac{1}{m} \sum\limits_{k=1}^{m} w_k(\ths) \geq C(\ths)/2
\right)  \geq 1- \beta_1 \exp \left(  -  \frac{m \beta_2  }{4 M^2}                     \right)\,.
\end{equation}
Finally, we bound the $l_\infty$-norm of the numerator of the second term in \eqref{der1}. We fix a pair of indices $r<s.$ It is not difficult to calculate that 
\[
\Ex _{Y \sim h} \left[\frac{\exp \left[(\ths)'J(Y)\right]}{h(Y)} \left( J \rs (Y) - \frac{\grad \rs C(\ths)}{C(\ths)} \right) \right] =0
\]
and for every $y$
\[
\frac{\exp((\ths)'J(y)}{h(y)} \left| J \rs (y) - \frac{\grad \rs C(\ths)}{C(\ths)}
\right| \leq 2 M C(\ths)\, .
\]
From Theorem \ref{miasojedow} we obtain for every positive $t$
\[
\left|\frac{1}{m} \sum\limits_{k=1}^{m} w_k(\ths) \left[ J \rs (Y^k) - \frac{\grad \rs C(\ths)}{C(\ths)} \right] \right|
\geq t C(\ths)/4                    
\]
with probability at least  $1- 2 \beta_1 \exp \left(  -  \frac{ t^2 m \beta_2  }{64 M^2} \right)$.
Using union bounds we estimate the numerator of the second expression in \eqref{der1}.
This fact and \eqref{der2} imply that for every positive $t$ and natural $n,m$ with probability at least
$1- \beta_1 \exp \left(  -  \frac{m \beta_2  }{4 M^2}                     \right)  -2 \bbd \beta_1 \exp \left(  -  \frac{ t^2 m \beta_2  }{64 M^2} \right)
$ 
we have 
\begin{equation}
\label{deriv_form2}
\left|\frac{\frac{1}{m} \sum\limits_{k=1}^{m} w_k(\ths) \left[ J(Y^k) - \frac{\grad C(\ths)}{C(\ths)} \right]}
                 {\frac{1}{m} \sum\limits_{k=1}^{m} w_k(\ths) } \right| _\infty \leq t/2\,.
\end{equation}
Taking \eqref{deriv_form} and \eqref{deriv_form2} together we finish the proof.

\end{proof}

\begin{corollary}
\label{cor_der}
Let $\varepsilon >0, \xi >1$ and
\[\lambda = \frac{\xi + 1}{\xi -1}  \max\left(2 \sqrt{\frac{2 \log(2 \bbd/\varepsilon)}{n}}, 
8 M \sqrt{ \frac{\log\left[(2 \bbd+1)\beta_1 /\varepsilon\right]}{m\beta_2}}
\right)\,.
\]
Conditions \eqref{ncond} and \eqref{mcond} imply 
\[P(\Omega_1 ) \geq 1-2 \varepsilon\,.\]
\end{corollary}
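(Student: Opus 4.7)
The plan is to apply Lemma \ref{derivative} with $t = \frac{\xi-1}{\xi+1}\lambda$, which is exactly the threshold appearing in the definition of $\Omega_1$. Then $P(\Omega_1^c)$ is bounded by the sum of three terms, and I need to show each is absorbed by $2\varepsilon$: specifically, I will argue that the initial-sample term contributes at most $\varepsilon$, while the two MCMC terms together contribute at most $\varepsilon$.

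First, the choice of $\lambda$ directly forces $t \geq 2\sqrt{2\log(2\bbd/\varepsilon)/n}$, hence $nt^2/8 \geq \log(2\bbd/\varepsilon)$, so the first term $2\bbd \exp(-nt^2/8)$ is at most $\varepsilon$. This handles the contribution from the i.i.d.\ part of $\grad \llnm(\ths)$ via Hoeffding. Similarly, $\lambda$ is chosen so that $t \geq 8M\sqrt{\log[(2\bbd+1)\beta_1/\varepsilon]/(m\beta_2)}$, giving $t^2 m\beta_2/(64M^2) \geq \log[(2\bbd+1)\beta_1/\varepsilon]$, so the third term is bounded by $2\bbd\beta_1 \cdot \varepsilon/[(2\bbd+1)\beta_1] = \frac{2\bbd}{2\bbd+1}\varepsilon$.

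The only remaining step is to bound the middle term $\beta_1 \exp(-m\beta_2/(4M^2))$ by $\frac{1}{2\bbd+1}\varepsilon$, so that it combines with the third term to give at most $\varepsilon$ of MCMC error. For this I use condition \eqref{mcond} directly: it gives
\[
\frac{m\beta_2}{4M^2} \;\geq\; \frac{16(1+\xi)^4\,\alpha^2(\xi)\,\bd0^2}{\eftwo}\,\log\!\left[\frac{2\bbd(\bbd+1)\beta_1}{\varepsilon}\right].
\]
Since $\xi>1$ and $\alpha(\xi)>2$, the prefactor $16(1+\xi)^4\alpha^2(\xi)\bd0^2/\eftwo$ is bounded below by a constant exceeding $1$, and since $2\bbd(\bbd+1)\geq 2\bbd+1$, the right-hand side dominates $\log[(2\bbd+1)\beta_1/\varepsilon]$, yielding the required bound on the middle term. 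Adding the three bounds gives $P(\Omega_1^c)\leq \varepsilon + \varepsilon = 2\varepsilon$.

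The main obstacle is the middle term, since it does not vanish just from the choice of $\lambda$ and must instead be controlled via the assumed lower bound on $m$ in \eqref{mcond}. This is precisely why the condition \eqref{mcond} is formulated with $\log[2\bbd(\bbd+1)\beta_1/\varepsilon]$ rather than the smaller logarithm that would suffice for only the third term of Lemma \ref{derivative}. Once this accounting is made, the result follows immediately.
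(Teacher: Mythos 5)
Your proof is correct and follows exactly the paper's route: the paper's entire proof of this corollary is the single line ``take $t=\frac{\xi-1}{\xi+1}\lambda$ in Lemma \ref{derivative}'', and you have merely made the resulting bookkeeping explicit (i.i.d.\ term $\le\varepsilon$; middle plus third MCMC terms $\le\varepsilon$ via \eqref{mcond} and $2\bar{d}(\bar{d}+1)\ge 2\bar{d}+1$). The one soft spot is your claim that the prefactor $16(1+\xi)^4\alpha^2(\xi)\bar{d}_0^2/F^2(\xi,T)$ exceeds $1$ ``since $\xi>1$ and $\alpha(\xi)>2$'' --- that alone does not control $\bar{d}_0^2/F^2(\xi,T)$; you also need the (true, but unstated) bound $F(\xi,T)\le(1+\xi)^2\bar{d}_0$, which holds because the entries of $\nabla^2\log C(\theta^\star)=\mathrm{Cov}\,J(Y)$ are bounded by one and $|\theta|_1\le(1+\xi)|\theta_T|_1\le(1+\xi)\bar{d}_0|\theta|_\infty$ on the cone, whence the prefactor is at least $16\alpha^2(\xi)>64$.
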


\begin{proof}
We take $t=\frac{\xi -1}{\xi +1} \lambda$ in Lemma \ref{derivative}.
\end{proof}

\begin{lemma}
\label{matrix_diff}
For every $n,m$ and positive $t$
\begin{multline}
\label{mat_diff1}
P \left( \left| \hess \llnm (\ths) - \hess \log C(\ths)\right| _ \infty \leq t \right) \geq 1-2 \beta_1 \exp \left(  -  \frac{m \beta_2 }{4 M^2}                     \right)  
\\ - 2 \bbd (\bbd+1)  \beta_1 \exp \left(  -  \frac{ t^2 m \beta_2  }{256 M^2} \right)\,.
\end{multline}
\end{lemma}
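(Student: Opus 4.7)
The plan is to reduce the matrix difference to entrywise averages of bounded functions of the Markov chain and then to apply Theorem~\ref{miasojedow} together with a union bound, following the same template as Lemma~\ref{derivative} but with more delicate algebra.

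Using \eqref{2deriv} and the identity $\hess \log C(\ths) = \hess C(\ths)/C(\ths) - \left(\grad C(\ths)/C(\ths)\right)^{\otimes 2}$, I split the difference as $\hess \llnm(\ths) - \hess \log C(\ths) = T_1 - T_2$, where, with the abbreviations $S_k = \frac{1}{m}\sum_k w_k(\ths)$, $S_J = \frac{1}{m}\sum_k w_k(\ths) J(Y^k)$ and $S_{JJ} = \frac{1}{m}\sum_k w_k(\ths) J(Y^k)^{\otimes 2}$,
\[
T_1 = \frac{S_{JJ}}{S_k} - \frac{\hess C(\ths)}{C(\ths)}, \qquad T_2 = \left(\frac{S_J}{S_k}\right)^{\otimes 2} - \left(\frac{\grad C(\ths)}{C(\ths)}\right)^{\otimes 2}.
\]
Introduce the mean-zero (under $h$) averages
\[
G_1 = \frac{1}{m}\sum_{k=1}^m w_k(\ths)\!\left[J(Y^k)^{\otimes 2} - \frac{\hess C(\ths)}{C(\ths)}\right], \quad G_2 = \frac{1}{m}\sum_{k=1}^m w_k(\ths)\!\left[J(Y^k) - \frac{\grad C(\ths)}{C(\ths)}\right],
\]
each of whose entries is a bounded function of $y$ with sup-norm at most $2 M C(\ths)$ by \eqref{betas} and $|J(y)|_\infty \leq 1$. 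A one-line rearrangement gives $T_1 = G_1/S_k$, while the factorization $u u^\top - v v^\top = u (u-v)^\top + (u-v) v^\top$, combined with $|S_J/S_k|_\infty \leq 1$ and $|\grad C(\ths)/C(\ths)|_\infty \leq 1$, gives $|T_2|_\infty \leq 2\,|S_J/S_k - \grad C(\ths)/C(\ths)|_\infty = 2\,|G_2/S_k|_\infty$.

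Next, I apply Theorem~\ref{miasojedow} to the scalar $g(y) = w(y)$, with $|g|_\infty \leq MC(\ths)$ and stationary mean $C(\ths)$, at threshold $C(\ths)/2$, to obtain
\[
P\bigl(\Omega_0\bigr) \geq 1 - 2\beta_1 \exp\!\bigl(-m\beta_2/(4M^2)\bigr), \qquad \Omega_0 := \{S_k \geq C(\ths)/2\}.
\]
On $\Omega_0$, the estimates above yield $|T_1|_\infty \leq 2|G_1|_\infty/C(\ths)$ and $|T_2|_\infty \leq 4|G_2|_\infty/C(\ths)$, so it suffices to enforce $|G_1|_\infty \leq tC(\ths)/4$ and $|G_2|_\infty \leq tC(\ths)/8$. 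Applying Theorem~\ref{miasojedow} entrywise (with $|g|_\infty \leq 2MC(\ths)$), each entry of $G_1$ fails this bound with probability at most $2\beta_1 \exp(-\beta_2 m t^2/(64 M^2)) \leq 2\beta_1 \exp(-\beta_2 m t^2/(256 M^2))$, and each entry of $G_2$ with probability at most $2\beta_1 \exp(-\beta_2 m t^2/(256 M^2))$.

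A union bound over the $\bbd(\bbd+1)/2$ distinct entries of the symmetric matrix $G_1$ and the $\bbd$ entries of $G_2$, whose total $\bbd(\bbd+3)/2 \leq \bbd(\bbd+1)$, combined with the denominator event $\Omega_0$, produces exactly \eqref{mat_diff1}. The main obstacle is the constant-tracking in the algebraic decomposition: choosing the two thresholds $tC(\ths)/4$ and $tC(\ths)/8$ so that both tails collapse to the common exponent $t^2/(256 M^2)$, and verifying that the one-sided lower bound $S_k \geq C(\ths)/2$ alone controls both ratios $T_1 = G_1/S_k$ and $T_2$ (rather than requiring two-sided control of $S_k$, which would degrade the constants). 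No further probabilistic input beyond Theorem~\ref{miasojedow} is required.
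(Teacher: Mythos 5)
Your proof is correct and follows essentially the same route as the paper: the same split into the two matrices $T_1$ and $T_2$, the same lower bound $S_k \geq C(\ths)/2$ for the denominator via Theorem~\ref{miasojedow}, the same factorization bounding $|T_2|_\infty$ by twice the $l_\infty$-distance of the weighted mean of $J(Y^k)$ from $\grad C(\ths)/C(\ths)$, and the same entrywise application of Theorem~\ref{miasojedow} with thresholds $tC(\ths)/4$ and $tC(\ths)/8$ yielding the exponents $t^2/(64M^2)$ and $t^2/(256M^2)$. The only (harmless) deviations are bookkeeping: you invoke the denominator event once with the two-sided constant $2\beta_1$ where the paper uses it twice with $\beta_1$ each, and you exploit the symmetry of $G_1$ in the union bound where the paper counts $\bbd^2$ entries; both tallies collapse to the stated bound.
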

\begin{proof}
To estimate the  $l_\infty$-norm of the matrix difference  in \eqref{mat_diff1} we bound $l_\infty$-norms of two matrices:
\begin{equation}
\label{second}
\frac{\sum\limits_{k=1}^{m} w_k(\ths) J(Y^k)^{\otimes 2}}
                 {\sum\limits_{k=1}^{m} w_k(\ths)} - \frac{\hess C(\ths)}{C(\ths)}
= \frac{\frac{1}{m} \sum\limits_{k=1}^{m} w_k(\ths) \left[J(Y^k)^{\otimes 2}
 - \frac{\hess C(\ths)}{C(\ths)} \right]}
                 {\frac{1}{m} \sum\limits_{k=1}^{m} w_k(\ths)}
\end{equation}
and 
\begin{equation}
\label{first}
\left[ \frac{\frac{1}{m} \sum\limits_{k=1}^{m} w_k(\th) J(Y^k)}
                 {\frac{1}{m} \sum\limits_{k=1}^{m} w_k(\th) } \right]^{\otimes 2}
- \left[ \frac{\grad C(\ths)}
                 { C(\ths)} \right]^{\otimes 2}\, .
\end{equation}

The denominator of the right-hand side of \eqref{second} has been estimated in the proof of 
Lemma \ref{derivative}, so we bound the numerator. We can calculate that 
\[
\Ex _{Y \sim h} \left[\frac{\exp \left[(\ths)'J(Y)\right]}{h(Y)} \left( J(Y) ^{\otimes 2} - \frac{\hess C(\ths)}{C(\ths)} \right) \right] =0
\]
and for every $y$ and two pairs of indices $r<s, r'<s'$ we have
\[
\frac{\exp((\ths)'J(y)}{h(y)} \left| J \rs (y) J_{r's'} (y) - \frac{\hess _{rs,r's'} C(\ths)}{C(\ths)}
\right| \leq 2 M C(\ths)\,.
\]
Using the union bound and \eqref{der2} we upper-bound the $l_\infty$-norm of the right-hand side of \eqref{second} by $t/2$ with probability at least 
\[1-\beta_1 \exp \left(  -  \frac{m \beta_2  }{4 M^2}                     \right)  -2 \bbd  ^2 \beta_1 \exp \left(  -  \frac{ t^2 m \beta_2  }{64 M^2} \right)\,.\]

The last step of the proof is handling with the $l_\infty$-norm of \eqref{first}. This expression can be upper-bounded by 
\begin{equation}
\label{matrix1}
\left|\frac{\sum\limits_{k=1}^{m} w_k(\th) J(Y^k)}
                 {\sum\limits_{k=1}^{m} w_k(\th) } 
- \frac{\grad C(\ths)}
                 { C(\ths)}\right| _\infty 
\left(\left|\frac{\sum\limits_{k=1}^{m} w_k(\th) J(Y^k)}
                 {\sum\limits_{k=1}^{m} w_k(\th) } \right| _\infty
+ \left|\frac{\grad C(\ths)}
                 { C(\ths)}\right| _\infty 
\right) _\infty\, .
\end{equation}
The first term in \eqref{matrix1} has been bounded with high probability in the proof of 
Lemma \ref{derivative}. The remaining two can be easily estimated by one. Therefore, for every positive $t$  the $l_\infty$-norm of \eqref{first} is not greater than $t/2$ with probability at least 
\[
1-\beta_1 \exp \left(  -  \frac{m \beta_2  }{4 M^2}                     \right)  -2 \bbd  \beta_1 \exp \left(  -  \frac{ t^2 m \beta_2  }{256 M^2} \right)\, . 
\]
 Putting together this fact and the bound of \eqref{second} we finish the proof.
\end{proof}

\begin{corollary}
\label{Fdiff}
If \eqref{mcond}, then for every $\varepsilon>0$  the following inequality 
\[
\bef \geq \ef 
-  16 \bd0 (1+\xi)^2 M  \sqrt{  \frac{\log\left[2 \bbd (\bbd+1)\beta_1 /\varepsilon\right]}{m\beta_2}
}
\]
has probability at least $1-2 \varepsilon.$
\end{corollary}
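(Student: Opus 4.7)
The plan is to bound the difference of quadratic forms uniformly over the cone $\cone(\xi,T)$ by the matrix $\ell_\infty$-norm of $\hess\llnm(\ths)-\hess\log C(\ths)$, then apply Lemma \ref{matrix_diff} with an appropriate choice of $t$.

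First I would note the elementary duality bound: for any symmetric matrix $\Sigma$ and any vector $\th$,
\[
|\th'\Sigma\th|\leq |\Sigma|_\infty |\th|_1^2.
\]
Next, for $\th\in\cone(\xi,T)$, the cone membership $|\th_{T^c}|_1\leq \xi|\th_T|_1$ gives $|\th|_1\leq (1+\xi)|\th_T|_1$, and since $T$ has $\bd0$ elements, $|\th_T|_1\leq \bd0 |\th|_\infty$. Combining these,
\[
|\th|_1^2 \leq (1+\xi)|\th_T|_1\cdot|\th|_1 \leq (1+\xi)^2 \bd0 |\th_T|_1 |\th|_\infty.
\]
Therefore, dividing by $|\th_T|_1 |\th|_\infty$ and taking the infimum over $\cone(\xi,T)$,
\[
\bef \geq \ef - (1+\xi)^2 \bd0 \bigl|\hess\llnm(\ths)-\hess\log C(\ths)\bigr|_\infty.
\]

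The remaining task is to bound the matrix $\ell_\infty$-norm. I would apply Lemma \ref{matrix_diff} with the specific choice
\[
t = 16 M \sqrt{\frac{\log[2\bbd(\bbd+1)\beta_1/\varepsilon]}{m\beta_2}},
\]
which is designed so that $2\bbd(\bbd+1)\beta_1 \exp(-t^2 m\beta_2/(256 M^2)) = \varepsilon$. Plugging this $t$ into the inequality yields exactly the numerical constant $16$ in front of the square root in the statement.

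The one nontrivial verification is that condition \eqref{mcond} controls the remaining exponential term $2\beta_1\exp(-m\beta_2/(4M^2))$ by $\varepsilon$: this is where the hypothesis \eqref{mcond} enters, since \eqref{mcond} forces $m\beta_2/M^2$ to be at least a multiple of $\log[2\bbd(\bbd+1)\beta_1/\varepsilon]\geq \log(2\beta_1/\varepsilon)$, so this exponential term is indeed at most $\varepsilon$. Adding the two failure probabilities from Lemma \ref{matrix_diff} gives a total of at most $2\varepsilon$, completing the proof. The main obstacle is not really an obstacle but rather bookkeeping: one has to chase the constants $(256, 16, (1+\xi)^2, \bd0)$ through so that the final bound matches the displayed expression, and one has to notice the useful factorization $|\th|_1^2\leq (1+\xi)^2 \bd0 |\th_T|_1|\th|_\infty$ tailored to the definition of the cone invertibility factor.
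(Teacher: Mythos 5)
Your proposal is correct and follows essentially the same route as the paper: apply Lemma \ref{matrix_diff} with $t = 16M\sqrt{\log[2\bbd(\bbd+1)\beta_1/\varepsilon]/(m\beta_2)}$ and combine it with the deterministic comparison $\bef \ge \ef - (1+\xi)^2\bd0\,|\hess\llnm(\ths)-\hess\log C(\ths)|_\infty$. The only difference is that you derive this comparison from scratch via $|\th|_1^2\le(1+\xi)^2\bd0|\th_T|_1|\th|_\infty$ on the cone, whereas the paper simply cites \citet[Lemma 4.1 (ii)]{Cox13} for it; your handling of the residual term $2\beta_1\exp(-m\beta_2/(4M^2))\le\varepsilon$ via \eqref{mcond} is also the intended (if unstated) use of that hypothesis.
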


\begin{proof} We take 
\[t= 16M  \sqrt{  \frac{\log\left[2 \bbd (\bbd+1)\beta_1 /\varepsilon\right]}{m\beta_2}
}
\]
in Lemma \ref{matrix_diff} and use \citet[Lemma 4.1 (ii)]{Cox13}.  
\end{proof}

\section{Proofs of main results}

\label{proofsmain}

\begin{proof}[Proof of Theorem \ref{main}]
Fix $\varepsilon >0, \xi >1$ and denote $\gamma=\gamma(\xi) = \frac{\alpha(\xi)-2}{\alpha(\xi)} \in (0,1).$ 
First, from Corollary \ref{cor_der} we know that $P (\Omega_1 ) \geq 1-2\varepsilon.$
Using the condition \eqref{mcond} we obtain that 
\[
\ef 
-  16 \bd0 (1+\xi)^2 M  \sqrt{  \frac{\log\left[2 \bbd (\bbd+1)\beta_1 /\varepsilon\right]}{m\beta_2}} \geq \gamma \ef\,.
\]
Therefore, from Corollary \ref{Fdiff} we have that $P (\bef \geq \gamma \ef) \geq 1-2 \varepsilon.$ It is not difficult to calculate that 
\[
\frac{(1+\xi) \bd0 \lambda}{ \gamma \ef} \leq e^{-1}\,,
\]
so we have also $P (\Omega_2 ) \geq 1-2\varepsilon.$
To finish the proof we use Lemma \ref{estim} (with $\eta=1$ for simplicity) and again bound $\bef$ from above by $\gamma \ef $ in the event $A$ defined in \eqref{estim1}.

\end{proof}

\begin{proof}[Proof of Corollary \ref{maincor}]
The proof is a simple consequence of the uniform bound \eqref{main_in} obtained in Theorem \ref{main}. Indeed, for an  arbitrary pair of indices $(r,s) \notin T$ we obtain 
\[
|\hth \rs | =|\hth \rs  - \ths \rs| \leq R_n^m\,,
\] 
so $(r,s) \notin \hat{T}.$
Analogously, if we take a pair  $(r,s) \in T,$ then 
\[
|\hth \rs | \geq  |\ths \rs| -|\hth \rs  - \ths \rs| > 2 \delta  - R_n^m \geq \delta\,. 
\]
\end{proof}

\bibliography{refs}

\end{document}